\DeclareMathAlphabet{\mathscr}{U}{rsfso}{m}{n}% 
\theoremstyle{thmstyleone}%
\newtheorem{theorem}{Theorem}%  meant for continuous numbers
\newtheorem{Lemma}{Lemma}
\theoremstyle{thmstyletwo}%
\theoremstyle{thmstylethree}%
\begin{document}

\title[Over and Under Doses of Antibiotic Treatment to Bacterial Resistance]
{Modeling the Effects of Over and Under Doses Antibiotic Treatment to Bacterial Resistance in Presence of Immune System}

%%=============================================================%%
%% GivenName	-> \fnm{Joergen W.}
%% Particle	-> \spfx{van der} -> surname prefix
%% FamilyName	-> \sur{Ploeg}
%% Suffix	-> \sfx{IV}
%% \author*[1,2]{\fnm{Joergen W.} \spfx{van der} \sur{Ploeg} 
%%  \sfx{IV}}\email{iauthor@gmail.com}
%%=============================================================%%

\author*[1]{\fnm{Uzzwal Kumar} \sur{Mallick}}\email{mallickuzzwal@math.ku.ac.bd}

\author[1]{\fnm{Jobayer} \sur{Ahmed}}%jobayerahmed629@gmail.com
\author[1]{\fnm{Khan Anik} \sur{Islam}}%khananik1512@gmail.com
\author[1]{\fnm{Pulak} \sur{Kundu}}

\affil[1]{\orgdiv{Mathematics Discipline}, \orgname{Khulna University}, \orgaddress{\street{Khulna-9208}, \country{Bangladesh}}}

%%==================================%%
%% sample for unstructured abstract %%
%%==================================%%
\abstract{Antibiotic resistance presents a growing global health threat by diminishing the effectiveness of treatments and allowing once-manageable bacterial infections to persist. This study develops and analyzes an optimization-based mathematical model to investigate the impact of varying antibiotic dosages on bacterial resistance, incorporating the role of the immune system. Additionally, to capture the effects of over- and under-dosing, a floor function is newly introduced into the model as a switch function. The model is examined both analytically and numerically. As part of the analytical solution, the validity of the model through the existence and uniqueness theorem, stability at the equilibrium points, and characteristics of equilibrium points in relation to state variables have been investigated. Numerical simulations, performed using the Runge–Kutta 4th order method, reveal that while antibiotics effectively reduce sensitive bacteria, they simultaneously increase resistant strains and suppress immune cell levels. The results also demonstrate that under-dosing antibiotics increases the risk of resistance through bacterial mutation, while over-dosing weakens the immune system by disrupting beneficial microbes. These findings emphasize that improper dosing—whether below or above the prescribed level—can accelerate the development of antibiotic resistance, underscoring the need for carefully regulated treatment strategies that preserve both antimicrobial effectiveness and immune system integrity.}

\keywords{Antibiotic-resistance; Immune System; Mutation; Numerical Treatment; Sensitive Bacteria}
%%\pacs[JEL Classification]{D8, H51}

%%\pacs[MSC Classification]{35A01, 65L10, 65L12, 65L20, 65L70}

\maketitle
\section{Introduction}
In recent times, the emergence of antibiotic-resistant bacteria has become a significant public health problem worldwide, highlighting the major challenges to the therapeutic effectiveness of traditional antibiotic treatments. The intricate interaction among bacterial populations, the amount of antibiotics administered, and the human immune system play an essential role in determining antibiotic resistance patterns. Both overdosing and underdosing therapies carry unintended consequences, potentially selecting resistant bacteria and jeopardizing treatment efficacy, leading to the death of the human population.
\begin{figure}[hbt!]
	\centering
	\includegraphics[height=4.5cm,width=9.0cm]{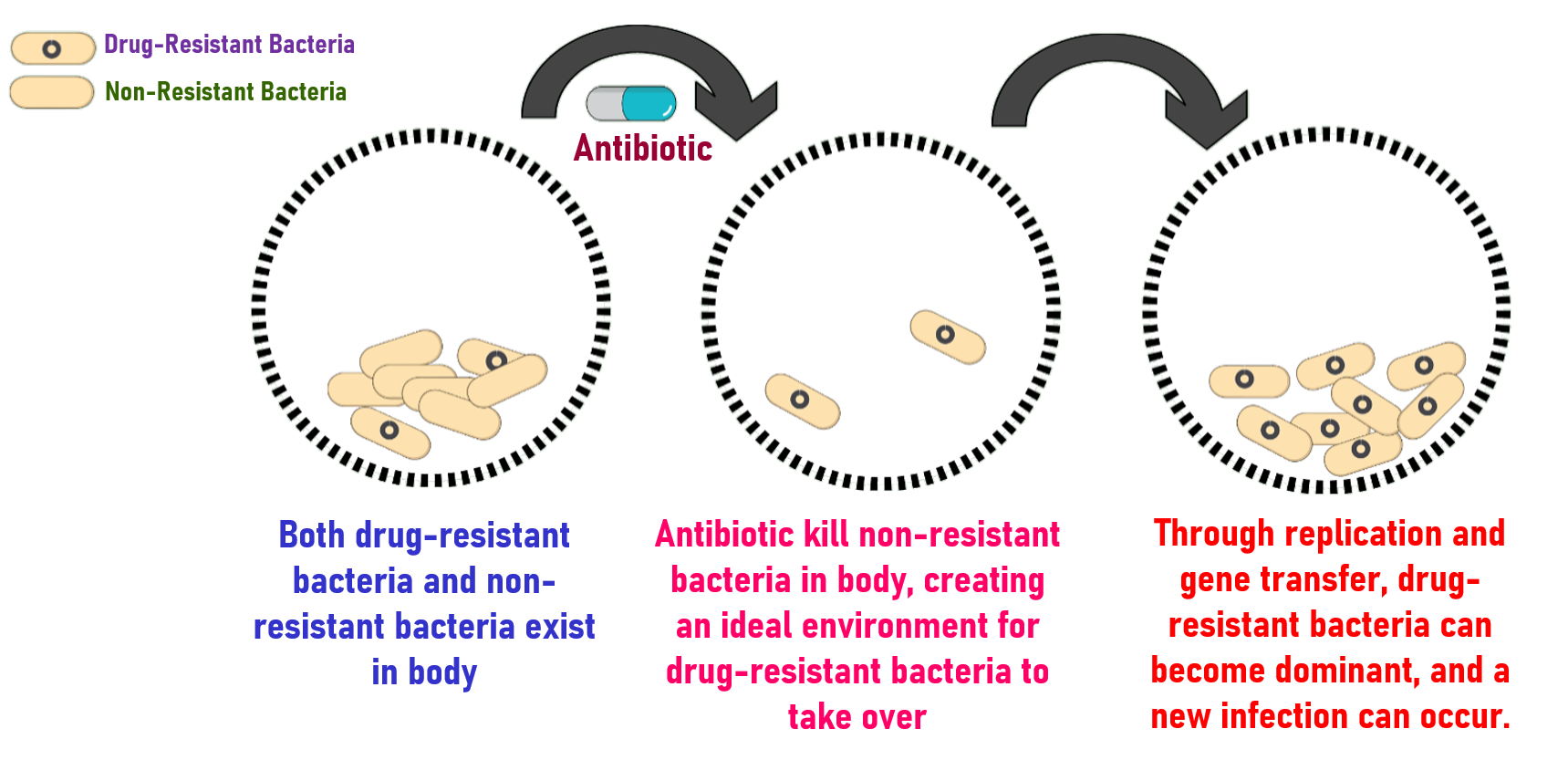}
	\caption{Mechanism of antibiotic in human body}
	\label{fig:antibiotic work}
\end{figure}
Infection has been the leading cause of most diseases throughout human history, with bacterial infections being particularly common \cite{Mondragón2014}. Antibiotic therapy applied to individuals is the most common procedure known to fight bacterial infection. Unfortunately, new bacterial strains resistant to medicines have emerged almost immediately following the introduction of each new class of antibiotics for treatment \cite{Butler2006}. On the other hand, the immune system recognizes and responds to antigens to defend the body against hazardous chemicals. Due to the host's immune system, various hosts might react differently to the same virus \cite{Alberts2002, Palomino2014}.

Antibiotics are some of the biochemical drugs that destroy microorganisms \cite{uddin2021}. In this track, the first antibiotic was discovered in 1928 by the physiologist Alexander Fleming, and on 12 February 1941, the first antibiotic, penicillin, was applied to the human body \cite{Gaynes2017}. Up to now many antibiotics have been invented and are still being discovered today. But, one of the challenges for treating bacterial diseases with antibiotics is the emergence of drug-resistant bacteria \cite{Muteeb2023}. When a person takes antibiotics, sensitive bacteria die, but resistant bacteria may proliferate and reproduce, as shown in Figure \ref{fig:antibiotic work}. By creating an optimal habitat, frequent and incorrect antibiotic usage produces bacteria that are drug-resistant. Bacteria become antibiotic-resistant when they evolve to make medications, chemicals, or other agents less effective \cite{Fqs2019, Baquero2021}. Several processes allow bacteria to achieve this like some bacteria can neutralise antibiotics before they hurt them or through mutations or switch the antibiotic attack location to avoid affecting bacteria function \cite{Kohanski2010, Reygaert2018}. A recent research conducted by BSMMU reveals that the use of antibiotics in Bangladesh has reached a critical level, rendering different kinds of antibiotics ineffective in the human body resulted in the unfortunate death of 26,000 individuals in the last year owing to the diminished efficacy of drugs on the human system \cite{Jamuna2024}. The current utilization of antibiotics from the reserve group of antibiotic will lead to an emergency in which common illnesses, such as the cold, may no longer respond effectively to antibiotics. This raises alarming concerns about the potential dangers to lives caused by apparently slight illnesses \cite{Bonna2022}.
During the duration of an antibiotic prescription, many patients have a tendency to either skip doses or continue taking the medication beyond the recommended intervals prescribed by a doctor \cite{Yusef2017}. Due to selection pressure favoring resistant bacteria, the immune system may struggle to produce an efficient response in the presence of antibiotics that are overused \cite{Bui2023}. Besides, the immune system may become weakened due to the strain of constantly fighting off bacterial infections if antibiotics are not administered often enough \cite{ScienceDaily2023}.
%\begin{figure}[hbt!]
	%\centering
	%\includegraphics[height=3.0cm,width=6.0cm]{figures/antibiotic-resistant-bacteria.jpg}
	%\caption{Bacteria within  body is becoming resistant to antibiotics\cite{Myths2023}. }
	%\label{fig:bacteria within  body is becoming resistant to antibiotics.}
%\end{figure}
Mathematical modeling plays a vital role in understanding disease dynamics in the human body as well as illustrating the biological phenomenon.  Naaly et al. (2024) \cite{NAALY2024100159} proposed a mathematical model for dengue fever transmission that incorporates vector control, treatment, and mass awareness; their results, supported by stability, bifurcation, and sensitivity analyses, highlight that simultaneous application of these interventions is substantially more effective in reducing disease spread than using them individually or in pairs. Ndendya et al. (2025) \cite{Ndendya2025b} developed a comprehensive model demonstrating how vaccination, treatment, and public health education can significantly reduce norovirus transmission. Similaryly, Liana et al. (2025) \cite{LIANA2025e02516} developed a deterministic model that integrates tuberculosis transmission with nutritional status, revealing through bifurcation and stability analysis that malnutrition not only increases TB susceptibility but also amplifies disease burden—emphasizing the need for nutritional interventions alongside conventional TB control strategies. Recently, a fractional-order rabies transmission model using the Atangana–Baleanu–Caputo derivative, combined with pre- and post-exposure prophylaxis and MCMC parameter estimation, demonstrated that memory effects crucially influence disease dynamics and vaccination outcomes \cite{NDENDYA2025e02800}. While public health education has been shown to significantly reduce the spread of conjunctivitis in a recent model developed in \cite{Ndendya2025}, antibiotic resistance remains a persistent challenge in bacterial infections. To address the issue of antibiotic resistance, many works have been done several times. Daşbaşı and Öztürk (2016) \cite{Daşbaşı2016} have discussed a mathematical model that evaluates the effectiveness of antibiotic treatments in the presence of bacterial resistance and immune system response. According to a study conducted by Khan and Imran
(2018) \cite{Adnan2018} chemostat-based model, the concentration of antibiotics changes over time according to the dosage strategy, and nutrients enter and exit the system. After that, Gjini et al. (2020)  \cite{Gjini2020} developed a control strategy model to demonstrate how antibiotic treatment effectiveness varies with timing, duration, and kill rate infections and suggested that a more empirical understanding of bacterial infection processes in individual hosts is necessary for antibiotic stewardship advancements. Also, the interaction between antibiotic therapy and the immune system response is important in reducing the bacterial population proposed through a system of ordinary differential equations by  Mondragon et al. (2021) \cite{Mondragon2021}. On this track, Nashebi et al. (2024) \cite{Nashebi2024} proposed a mathematical model whose analytical and numerical results reveal that antagonistic interactions against wildtype bacteria reduce resistance development more than those against mutants, emphasizing the importance of considering antibiotic interactions against wildtype bacteria rather than mutants. Again, research conducted by Khurana et al. (2024)  \cite{Khurana2024} found that empiric narrow-spectrum antibiotic usage lowers second- and third-line antibiotic resistance but increases resistance and death as well as treatment duration and mortality need to be reduced to convert to narrow-spectrum antibiotics. 
Also, a recent study in north India illustrated a significant presence of multidrug-resistant (MDR) for uptaking antibiotic and extended-spectrum beta-lactamase (ESBL) producing pathogens in livestock, poultry, and retail meat, with atypical EPEC being the most prevalent among diarrheal cases. Additionally, the study identified C. hyointestinalis, an emerging zoonotic pathogen, for the first time in India \cite{MAHINDROO2024}. Addressing antibiotic resistance as a global health crisis with its key ARB/ARG reservoirs, Karnwal et al. (2025) \cite{Karnwal2025} explored its drivers, challenges, and promising interventions—highlighting the need for integrated strategies including dietary modulation, novel therapeutics, and strengthened regulatory frameworks.

Upon thorough review of the available literature, it has been revealed that no research currently exists to identify an ideal strategy for achieving the highest possible number of immune cells while minimizing the expense of administering antibiotics. Meanwhile, there is a lack of research that shows how antibiotic over or under-dosing affects the immune system of humans described by a system of ordinary differential equations incorporating a new function. In order to figure out a suitable strategy for maximizing immune cells, this research proposes a mathematical model with an objective function. Additionally, the consequences of both overdosing and underdosing from recommended antibiotic doses have been found by introducing a floor function in modeling. 

The following portion of the article is organized in the following manner: The model for the action of antibiotics has been established in Section 2. Section 3 has presented several analytical analyses of the model in distinct subsections. After this, Section 4 presents the model's numerical simulation along with the results of under and over-dosing antibiotic treatments by way of introducing a floor function. Section 5 thereafter has provided the results together with associated observations. The last part, namely Section 6, highlights the conclusions obtained from this study.

 \section{ Design of an Antibiotic's Effect Model}
Antibiotic treatment is one of the most effective ways of treatment in case of bacterial disease. But at the same time, this effective formula also causes a great problem in the human health care sector by making the bacteria resistant against these antibiotics. Careless use of antibiotics accelerates this mutation. So we are in great danger because if we do not control this mutation, scientists are warning that we could soon return to the “dark age of medicine”, where all of our available drugs become ineffective against these resistant bacteria.\\
In this part of the article, we are going to develop a mathematical model using the concept of optimization based on the previous model mentioned in Daşbaşı and Öztürk (2016) \cite{Daşbaşı2016} to describe the dynamics of host antibiotic resistance of bacteria and the corresponding immune response in the presence of a single antibiotic. Although the present model assumes that antibiotics act immediately after administration, there may be a time delay before the drug becomes effective. To reduce model complexity and maintain analytical tractability, this delay effect would be neglected in the current study. Based on the biological phenomena described in several studies \cite{Mondragón2014, Daşbaşı2016, Esteva2021, Coll2009, Pugliese2008}, we consider the following four variables:
$S(t)$= Population size of Sensitive Bacteria at time t\\
$R(t)$= Population size of Resistance Bacteria at time t\\
$I(t)$ = Population size of Immune cells at time t\\
$A(t)$ = Antibiotic concentration at time t \\
The bacteria in the human body are divided in our body into antibiotic-sensitive and antibiotic-resistant bacteria that follow a logistic growth with carrying capacity C. Let $\beta_s$ and $\beta_r$  be the birth rates of sensitive and resistant bacteria respectively. The rate of change of sensitive and resistant bacteria depends on several parameters. The number of sensitive bacteria increases due to the birth rate of sensitive bacteria $\beta_s$ until the number of bacteria exceeds the carrying capacity. This is denoted by the term $\beta_s S\left( 1-\frac{S+R}{C}\right)$. The number of sensitive bacteria decreases due to the immune response of the 
host body, which is $\lambda SI$, where $\lambda$ is the per capita death rate of sensitive bacteria. Sensitive bacteria also decrease as a result of antibiotic concentration in the host body, which is $\Bar{\sigma}SA$, where $\Bar{\sigma}$ is the death rate of sensitive bacteria due to the exposure to given antibiotics. Finally, sensitive bacteria decrease as a result of the mutation of sensitive bacteria to resistant bacteria due to exposure to the given antibiotics, which are given by the term $\Bar{\alpha}SA$, where $\Bar{\alpha}$ is the mutation rate. Hence, the equation will be,
\begin{align}
    \frac{dS}{dt}=\beta_sS\left(1-\frac{S+R}{C}\right)-\lambda SI-\Bar{\alpha}SA-\Bar{\sigma}SA
\end{align}
Again, the number of resistant bacteria increases due to the birth rate of resistant bacteria $\beta_r$ until the number of bacteria has exceeded the carrying capacity. So, this is denoted by the term  $\beta_rR\left( 1-\frac{S+R}{C}\right)$.\\
\begin{figure}[hbt!]
   \centering
    \includegraphics[width=0.45\textwidth]{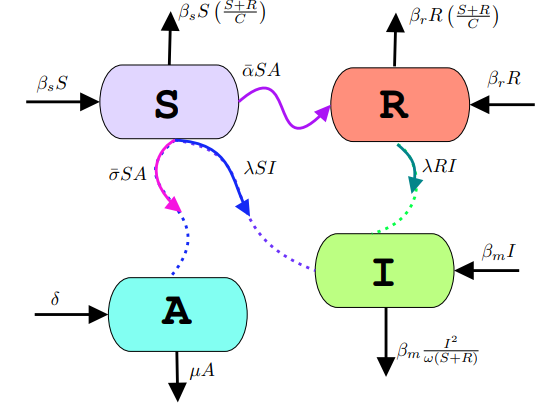}
    \caption{Schematic diagram of the model}
    \label{fig:model diagram}
\end{figure}
In addition, the rise of resistant bacteria is also attributed to the mutation of sensitive bacteria into resistant bacteria because of exposure to the antibiotic $\Bar{\alpha}SA$. Conversely, the number of resistant bacteria declines due to the immune response of the host body, denoted as $\lambda RI$, where $\lambda$ represents the per capita death rate of resistant bacteria. The equation can be expressed as follows:
\begin{align}
    \frac{dR}{dt}=\beta_r R\left(1-\frac{S+R}{C}\right)-\lambda RI+\Bar{\alpha}SA
\end{align}
Now, using a logistic term, we can see that immune cells are being brought to the infection region at a rate $\beta_m$, and their carrying capacity is equal to $\omega$ times the number of bacteria existing there. Concerning the expansion of immune cells, this is of great biological importance. These interactions may also be represented as a generic model of a bacterial function, as that of \textit{Mycobacterium tuberculosis}. Based on this information, the equation will be
\begin{align}
    \frac{dI}{dt}=\beta_m I\left(1-\frac{I}{\omega(S+R)}\right)
\end{align}
Once again, assuming that $\delta$ represents the constant rate of antibiotic supply and $\mu$ represents the per capita rate of antibiotic uptake, the rate of change in antibiotic concentration inside the host body will be,
\begin{align}
    \frac{dA}{dt}=\delta-\mu A
\end{align}
Hence, our designed model equations involving a system of ordinary differential equations as like \cite{khananik2020} which is a particular case of the model formulated and analzsed by Daşbaşı and Öztürk (2016) in \cite{Daşbaşı2016},
\begin{align}
     \frac{dS}{dt} &=\beta_sS\left(1-\frac{S+R}{C}\right)-\lambda SI-\Bar{\alpha}SA-\Bar{\sigma}SA \\
       \frac{dR}{dt}&=\beta_r R\left(1-\frac{S+R}{C}\right)-\lambda RI+\Bar{\alpha}SA \\
       \frac{dI}{dt} &=\beta_m I\left(1-\frac{I}{\omega(S+R)}\right) \\
        \frac{dA}{dt}& =\delta-\mu A 
\end{align}
with $S(0)=S_0\ge0, R(0)=R_0\ge0, I(0)=I_0\ge0$ and $A(0)=A_0\ge0$.\\
In order to decrease the quantity of parameters or make the dimensionless of the above model, the variables are modified in the following manner:
\begin{equation}
    s=\frac{S}{C}, r=\frac{R}{C}, m=\frac{I}{\omega C}, a=\frac{A}{\frac{\delta}{\mu}}
\end{equation}
Therefore, the addition of these variables transforms our model containing a system of ordinary differential equations as the following form:
\begin{align}
    \frac{ds}{dt}&=\beta_s s\left(1-(s+r)\right)-\eta sm-\alpha sa-\sigma s a \label{equ:model equation1}\\
    \frac{dr}{dt}&=\beta_r r(1-(s+r))-\eta rm+\alpha sa \label{equ:model equation2}\\
    \frac{dm}{dt}&=\beta_m m\left(1-\frac{m}{(s+r)}\right)\label{equ:model equation3}\\
    \frac{da}{dt}&=\mu(1-a)\label{equ:model equation4}
\end{align}
where 
\begin{equation*}
    \alpha=\Bar{\alpha}\left(\frac{\delta}{\mu}\right),\sigma=\Bar{\sigma}\left(\frac{\delta}{\mu}\right), \eta=\lambda\omega C
\end{equation*}
Also, this model is represented as a diagram in Figure \ref{fig:model diagram}.\\
Furthermore, antibiotics, when administered, selectively and efficiently eradicate the sensitive bacteria responsible for the disease.  At the same time, antibiotics provide a suitable preference that may cause certain bacteria to develop resistance. These bacteria may have altered genes or other ways to avoid the drug's effects. Over time, the immune system becomes less effective against certain bacteria that have developed resistance. As a result, as much as we would raise the strength of antibiotics for uptaking, the cost of buying antibiotics as well as resistant bacteria would also increase. So, it is high time to determine the situation for finding the maximum level of immune cells while simultaneously reducing the presence of resistant bacteria and antibiotic concentration while maintaining the cost. Therefore, our objective equation is 
\begin{equation}
 \textbf{Maximize} \int_{0}^{t} \left[B_1 m(t)-B_2 r(t)-B_3 a(t)\right] dt \label{eq:Integration}
\end{equation}
Subjected to 
\begin{align}
    \frac{ds}{dt}&=\beta_s s\left(1-(s+r)\right)-\eta sm-\alpha sa-\sigma s a \label{equ:model equation12}\\
    \frac{dr}{dt}&=\beta_r r(1-(s+r))-\eta rm+\alpha sa \label{equ:model equation22}\\
    \frac{dm}{dt}&=\beta_m m\left(1-\frac{m}{(s+r)}\right)\label{equ:model equation32}\\
    \frac{da}{dt}&=\mu(1-a)\label{equ:model equation42}
\end{align}
with the initial conditions $s(0)=s_0\ge 0, r(0)=r_0\ge0, m(0)=m_0\ge 0$ and $a(0)=a_0\ge 0$ and the parameters \( B_1 \), \( B_2 \), and \( B_3 \) are weight factors in the objective functional, which is designed to maximize the concentration of immune cells while penalizing resistant bacteria and limiting antibiotic usage. These weights reflect the relative importance or priority given to each of these goals. A higher value of a particular \( B_i \) (for \( i = 1, 2, 3 \)) indicates a greater emphasis on that component during the maximization process.

\section{Analytical Analysis of Model}
\subsection{Positivity of the Model}
\begin{theorem}
   \label{Theorem:Positivity Theorem}\index{Positivity}
Considering $s(0)\ge 0, r(0)\ge 0, m(0)\ge 0$ and $a(0)\ge0$, then $s(t),r(t),m(t)$ and $a(t)$ will be always positive for all $t \epsilon [0,T]$ in $R^{+}_4$ where $T>0$. 
\end{theorem}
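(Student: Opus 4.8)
The plan is to exploit the structural fact that each right-hand side either factors through its own state variable or decouples entirely, so that every component admits an integral representation that is manifestly sign-preserving. First I would dispose of the antibiotic equation \eqref{equ:model equation42}, which is independent of the other states: solving $\frac{da}{dt}=\mu(1-a)$ explicitly gives $a(t)=1+(a_0-1)e^{-\mu t}$, and since $e^{-\mu t}\in(0,1]$ for $t\ge 0$ this yields $a(t)\ge \min\{a_0,1\}\ge 0$ whenever $a_0\ge 0$. This settles the antibiotic component outright and, crucially, supplies the nonnegativity of $a$ that the source term in the resistant-bacteria equation will later require.

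Next I would treat $s$ and $m$, both of whose equations are linear in the state variable once the remaining variables are regarded as known coefficients. Rewriting \eqref{equ:model equation12} as $\frac{ds}{dt}=s\,f(t)$ with $f=\beta_s(1-(s+r))-\eta m-(\alpha+\sigma)a$, separation of variables gives $s(t)=s_0\exp\!\left(\int_0^t f(\tau)\,d\tau\right)$; because the exponential factor is strictly positive, $s(t)$ carries the sign of $s_0$, hence $s(t)\ge 0$. The same device applied to \eqref{equ:model equation32} yields $m(t)=m_0\exp\!\left(\int_0^t \beta_m\big(1-\tfrac{m}{s+r}\big)\,d\tau\right)\ge 0$. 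I would emphasize that these representations hold for any solution of the system on $[0,T]$, regardless of the signs taken by the other variables inside the integrand, since only positivity of the exponential factor is used; strict positivity follows whenever the corresponding initial datum is strictly positive.

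The resistant-bacteria equation \eqref{equ:model equation22} is the one genuinely different case, because the term $+\alpha s a$ blocks a purely multiplicative factorization. Here I would use an integrating factor: writing $\frac{dr}{dt}=r\,g(t)+\alpha s a$ with $g=\beta_r(1-(s+r))-\eta m$ and $\Phi(t)=\exp\!\left(-\int_0^t g\,d\tau\right)$, one obtains $\frac{d}{dt}(r\Phi)=\alpha s a\,\Phi$, so that
\begin{equation*}
r(t)=\Phi(t)^{-1}\left[\,r_0+\int_0^t \alpha\, s(\tau)\, a(\tau)\, \Phi(\tau)\, d\tau \right].
\end{equation*}
Since $\Phi>0$, and since $s\ge 0$ and $a\ge 0$ are already established, the bracket is bounded below by $r_0\ge 0$, giving $r(t)\ge 0$.

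The step I expect to be the main obstacle is not any of the sign arguments themselves but the well-posedness underlying the $m$-representation: the factor $m/(s+r)$ in \eqref{equ:model equation32} is singular where $s+r=0$, so the integral formula for $m$ presupposes that $s(t)+r(t)$ stays strictly positive on $[0,T]$. I would address this by restricting to initial data with $s_0+r_0>0$ and observing that the multiplicative representation for $s$ together with the integrating-factor representation for $r$ prevent $s+r$ from reaching zero in finite time once it is positive initially, which keeps the integrand in the $m$-formula finite and renders the whole argument internally consistent throughout $[0,T]$.
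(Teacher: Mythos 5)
Your proof is correct, and it shares the paper's overall backbone — sign-preserving exponential representations for $s$ and $m$, plus a direct treatment of the decoupled $a$ equation — but it diverges in the one place that actually matters, and to its advantage. The crux is the resistant-bacteria equation \eqref{equ:model equation2}, the only equation whose right-hand side is not proportional to its own state variable, because of the source term $+\alpha s a$. The paper handles it by discarding that term via the inequality $\frac{dr}{dt}\ge \beta_r r(1-(s+r))-\eta_0 r$ and then separating variables; this forces a division by $r$ (invalid if $r$ vanishes) and yields a lower bound of the form $r(t)\ge e^{\beta_r t-\int(s+r)\,dt-\eta_0 t}$ containing no factor of $r_0$ — at $t=0$ it would assert $r_0\ge 1$, which is false in general. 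Your variation-of-constants formula $r(t)=\Phi(t)^{-1}\left[r_0+\int_0^t \alpha\, s\, a\,\Phi\,d\tau\right]$ keeps the source term, uses the previously established signs of $s$ and $a$, and is valid even when $r_0=0$, so it repairs this gap rather than reproducing it. You are also more careful in two further respects: you keep $\eta m$ and $(\alpha+\sigma)a$ as time-dependent coefficients inside the exponential for $s$, whereas the paper freezes them as constants $\eta_0,\alpha_0,\sigma_0$; and you identify the singularity of $m/(s+r)$ at $s+r=0$ and restrict to $s_0+r_0>0$, an issue the paper never mentions. Finally, your remark that only nonnegativity — not strict positivity — follows when an initial datum vanishes is accurate; the theorem as stated overclaims on this point (e.g. $s_0=0$ forces $s\equiv 0$), and your formulation quietly corrects it.
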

\begin{proof}
    Taking all parameters of the system and all initial values to be positive, we have to prove that $s(t), r(t),m(t)$ and $a(t)$ will be positive for all  $t \epsilon [0,T]$ in $R^{+}_4$.\\
From the equation (\ref{equ:model equation3}), we can write  as follows
\begin{align}
    \frac{dm}{dt}&=\beta_m m\left(1-\frac{m}{s+r}\right) \\
    \Longrightarrow \frac{dm}{m}&=\beta_m \left(1-\frac{m}{s+r}\right) dt 
\end{align}
Now integrating both sides, we get
\begin{align}
 ln(m) &= \beta_m t-\beta_m t\int\frac{m}{s+r}dt \\
 \Longrightarrow m &= e^{\beta_m t-\beta_m t\int\frac{m}{s+r}dt} > 0
\end{align}
Again, equation (\ref{equ:model equation4})
shows,
\begin{align}
    \frac{da}{dt}&=\mu-\mu a \\
    \Longrightarrow \frac{da}{dt} & \ge -\mu a \\
    \Longrightarrow \frac{da}{a} & \ge -\mu dt \\
    \Longrightarrow ln(a) & \ge -\mu t
\end{align}
After including initial conditions,
\begin{align*}
    \therefore a(t) & \ge a_0e^{-\mu t} >0
\end{align*}
We have already got $m(t)>0$ and $a(t)>0$, so consider that $\eta m=\eta_0$ and $\alpha a=\alpha_0$ are positive always. Then from equation \ref{equ:model equation1}, we get, 
\begin{align}
    \frac{ds}{dt}&=\beta_s s(1-(s+r))-\eta_0 s-\alpha_0s-\sigma_0s \\
 \Longrightarrow   \frac{ds}{s}&=\beta_s (1-(s+r))-\eta_0 -\alpha_0-\sigma_0 \\
 \Longrightarrow ln s &=(\beta_s-\eta_0-\alpha_0-\sigma_0)t-\beta_s \int{(s+r)dt} \\
 \therefore s(t) &=e^{(\beta_s-\eta_0-\alpha_0-\sigma_0)t-\beta_s \int{(s+r)dt}}>0
\end{align}
Again, From equation \ref{equ:model equation2}, we get
\begin{align}
    \frac{dr}{dt}& \ge \beta_r r(1-(s+r))-\eta_0 r \\
    \Longrightarrow  \frac{dr}{r}& \ge \{\beta_r (1-(s+r))-\eta_0\} dt \\
    \therefore r(t) & \ge e^{\{\beta_rt -\int{(s+r)dt}-\eta_0t\}} >0
\end{align}
This completes our proof.
\end{proof}

\subsection{Feasible Region of the Model}
\begin{theorem}
    The model is biologically feasible in the region
    \begin{align}
        \Lambda &=\left\{(s,r,m,a)\epsilon R_+^4:0\le s,r,0\le m\le s+r\le 1,0\le a\le 1\right\}
    \end{align}
\end{theorem}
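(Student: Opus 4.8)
The plan is to show that $\Lambda$ is positively invariant for the flow of \eqref{equ:model equation1}--\eqref{equ:model equation4}: if the initial data lie in $\Lambda$, the solution remains in $\Lambda$ for all $t\in[0,T]$. Since Theorem \ref{Theorem:Positivity Theorem} already guarantees $s,r,m,a\ge 0$, the nonnegativity faces $s=0$, $r=0$, $m=0$, $a=0$ are handled; it remains to verify the three upper constraints $a\le 1$, $s+r\le 1$ and $m\le s+r$. For each I would invoke the standard tangency (Nagumo-type) criterion: on the face where a constraint is active, check that the vector field does not push the trajectory outward.

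The antibiotic bound is immediate. Equation \eqref{equ:model equation4} is the scalar linear ODE $\dot a=\mu(1-a)$, whose solution is $a(t)=1+(a_0-1)e^{-\mu t}$; with $0\le a_0\le 1$ this stays in $[0,1]$. Equivalently, on the face $a=1$ one has $\dot a=0$, so the field is tangent and $a$ cannot exceed $1$. For the total bacterial load, I would add \eqref{equ:model equation1} and \eqref{equ:model equation2}; the mutation terms $\pm\alpha sa$ cancel, giving
\begin{equation*}
\frac{d(s+r)}{dt}=(\beta_s s+\beta_r r)\bigl(1-(s+r)\bigr)-\eta(s+r)m-\sigma sa .
\end{equation*}
On the face $s+r=1$ the logistic factor vanishes, leaving $\tfrac{d}{dt}(s+r)=-\eta m-\sigma sa\le 0$ by positivity, so the field points inward and $s+r\le 1$ is preserved.

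The delicate step, and the one I expect to be the main obstacle, is the constraint $m\le s+r$, because here the ``carrying capacity'' $s+r$ is itself time-dependent. On the face $m=s+r=:N$ the logistic term in \eqref{equ:model equation3} gives $\dot m=0$, while $\dot N$ need not vanish, so
\begin{equation*}
\frac{d}{dt}\bigl(m-(s+r)\bigr)\Big|_{m=N}=-\dot N=-(\beta_s s+\beta_r r)(1-N)+\eta N^2+\sigma sa ,
\end{equation*}
whose sign is not determined by positivity alone. Thus the naive tangency check does not close, and a fully rigorous argument requires controlling $\dot N$ on this face---either through a differential-inequality/comparison argument bounding the ratio $u=m/(s+r)$ via $\dot u=\beta_m u(1-u)-u\,\dot N/N$, or by restricting to the parameter regime under which $\dot N\ge 0$ there. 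I would therefore make precise the heuristic that the logistic dynamics attracts $m$ toward its instantaneous capacity $s+r$, and supply the comparison estimate that turns this attraction into the invariance $m\le s+r$; this is where the real work of the proof lies, the other faces being routine.
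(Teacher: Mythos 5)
Your handling of the first two constraints is correct and matches the paper in substance: the paper also integrates $\dot a=\mu(1-a)$ explicitly to get $a(t)=1+(a(0)-1)e^{-\mu t}$, and also adds the first two equations to obtain a logistic differential inequality, $\frac{d(s+r)}{dt}\le \beta_s(s+r)(1-(s+r))$, concluding $s+r\le 1$ by comparison; your tangency phrasing on the faces $a=1$ and $s+r=1$ is equivalent (and in fact slightly cleaner, since evaluating on the face avoids the paper's implicit use of $\beta_r\le\beta_s$ in replacing $\beta_s s+\beta_r r$ by $\beta_s(s+r)$). The genuine gap is the third constraint: you correctly isolate $m\le s+r$ as the hard step, but your proposal stops at announcing that the real work is to control $\dot N$ on the face or to run a comparison for the ratio $u=m/(s+r)$; no such estimate is actually supplied, so your argument does not establish the invariance of $\Lambda$.

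Worse, your own face computation shows that this step cannot be closed as stated. At the corner $m=s+r=1$ one has $\dot m=0$ from \eqref{equ:model equation3}, while $\frac{d}{dt}(s+r)=-\eta m(s+r)-\sigma sa=-\eta-\sigma sa<0$, so the trajectory immediately leaves the set $\{m\le s+r\}$; equivalently, your ratio satisfies $\dot u=-\dot N/N>0$ at $u=1$ whenever $\dot N<0$ there. Hence $\{m\le s+r\}$ is not positively invariant, and no comparison estimate along your proposed lines can prove that it is. For comparison, the paper's own proof of this step is weaker than your diagnosis: it freezes $s+r\equiv\nu$ as a constant, solves the resulting logistic equation explicitly as $m=\nu/(1+e^{-kt-m(0)})\le\nu$, and concludes $m\le s+r$ — an argument valid only for constant bacterial load, i.e., it silently assumes away exactly the difficulty you identified. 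What your tangency method does salvage is the weaker bound $m\le 1$: inside $\{s+r\le 1\}$, on the face $m=1$ one has $\dot m=\beta_m\left(1-\frac{1}{s+r}\right)\le 0$, so the box $\{0\le s,\,0\le r,\,s+r\le 1,\,0\le m\le 1,\,0\le a\le 1\}$ is invariant. If you restate the feasible region with $m\le 1$ in place of $m\le s+r$, your Nagumo argument closes on every face and yields a correct theorem; as written, both your proposal (by omission) and the paper's proof (by the frozen-capacity assumption) leave the stated claim unproved.
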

\begin{proof}
    Adding first two equations of our model, we get
    \begin{align}
        \frac{ds}{dt}+\frac{dr}{dt} &= (\beta_s s+\beta_r r)(1-(s+r))-\eta m(s+r)-\sigma s a
    \end{align}
can be observed. Taking the region $\Lambda$, we get the following inequality:
\begin{align}
    \frac{d(s+r)}{dt}\le \beta_s (s+r)(1-(s+r))
    \label{feasible}
\end{align}
By the solution according to $(s+r)$ of inequality \eqref{feasible}, it follows that $0\le s+r\le 1$ for all $t\ge 0$. Furthermore, the solution of the last equation of the system (\ref{equ:model equation1}-\ref{equ:model equation4}) are 
\begin{align}
    a(t)=1+(-1+a(0))e^{-\mu t}
\end{align}
where the initial conditions satisfy $0\le a(0)\le 1$.\\
At the end, we consider that $0\le s+r=\nu(constant)\le 1$. Then the solution for the third equation in our considering model will be
\begin{align}
    m=\frac{\nu}{1+e^{-kt-m(0)}}
      \label{feasible2}
\end{align}
where initial conditions satisfy $0<m(0)\le s(0)+r(0)$. From the equation (\ref{feasible2}), it is found that $0\le m\le s+r\le 1$.\\
Thus we can say that all the solutions of our model are bounded within the region $\Lambda$.

\end{proof}

\subsection{Existence and Uniqueness of Model's Solution}
\begin{Lemma}
   Consider the domain $S$ that fulfils the Lipschitz condition. The system's solutions are both existent and unique for any time $T\ge0$ within the domain $D$, provided any non-negative initial conditions.
\end{Lemma}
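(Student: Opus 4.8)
The plan is to cast the system \eqref{equ:model equation12}--\eqref{equ:model equation42} in the compact vector form $\frac{dX}{dt}=F(X)$, where $X=(s,r,m,a)^{T}$ and $F=(f_1,f_2,f_3,f_4)^{T}$ collects the four right-hand sides, and then to invoke the Picard--Lindel\"of theorem: if $F$ is continuous and satisfies a Lipschitz condition in the variable $X$ on the domain $D$, then for any non-negative initial data there exists a unique solution on $[0,T]$ for every $T\ge 0$. The whole argument therefore reduces to exhibiting a finite Lipschitz constant $L$ for $F$ on $D$.

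First I would verify that each component $f_i$ is continuous and continuously differentiable on $D$. All of the terms are polynomial except the ratio $m/(s+r)$ appearing in $f_3$, so the only possible breakdown of smoothness is at $s+r=0$. I would therefore take $D$ to be the biologically relevant region furnished by the feasible-region theorem, namely a subset of $\Lambda=\{(s,r,m,a)\in R_+^4:\,0\le m\le s+r\le 1,\ 0\le a\le 1\}$ on which $s+r$ stays bounded below by a positive constant $c$; the positivity theorem guarantees that trajectories starting from positive data keep $s+r>0$, so this restriction costs nothing for the solutions of interest.

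Next I would compute the Jacobian entries $\partial f_i/\partial x_j$ and bound each of them by a constant on $D$. The polynomial terms are automatically bounded because $s,r,m,a$ all lie in a bounded set; the delicate entries are $\partial f_3/\partial s$, $\partial f_3/\partial r$ and $\partial f_3/\partial m$, which involve $m/(s+r)$ and $m^{2}/(s+r)^{2}$. Using the lower bound $s+r\ge c>0$ together with $0\le m\le s+r\le 1$, these quantities are also finite. Taking $L$ to be a suitable multiple of the maximum of all these bounds and applying the mean value theorem coordinatewise yields $\lVert F(X_1)-F(X_2)\rVert\le L\,\lVert X_1-X_2\rVert$ for all $X_1,X_2\in D$, which is exactly the Lipschitz condition required.

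Finally, with continuity and the Lipschitz estimate in hand, the Picard--Lindel\"of theorem—equivalently, the Banach fixed-point theorem applied to the integral operator $(\mathcal{T}X)(t)=X(0)+\int_0^t F(X(\tau))\,d\tau$—delivers a unique solution on $[0,T]$. I expect the main obstacle to be precisely the singular term $m/(s+r)$: establishing a \emph{uniform} Lipschitz constant requires keeping the domain away from $s+r=0$, and the cleanest way to justify this is to lean on the positivity result so that the restriction is harmless in practice.
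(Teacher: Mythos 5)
Your proposal follows the same basic route as the paper's proof: compute the partial derivatives of the right-hand sides, argue they are bounded, conclude the Lipschitz condition holds, and then invoke a standard existence--uniqueness theorem (the paper cites Sowole et al.\ rather than naming Picard--Lindel\"of, but it is the same mechanism). The genuine difference is in rigor at the one point where the argument is actually delicate. The paper explicitly computes and bounds only the partials of $f$ and $g$ (the $s$ and $r$ equations, whose right-hand sides are polynomial and hence trivially bounded on a bounded region), and then disposes of $h$ and $p$ with the phrase ``in a similar fashion we can readily demonstrate'' --- yet $h(s,r,m,a)=\beta_m m\left(1-\frac{m}{s+r}\right)$ is precisely the component that is \emph{not} similar: its partials involve $\frac{m^2}{(s+r)^2}$ and $\frac{m}{s+r}$, which blow up as $s+r\to 0$ unless the domain is restricted. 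You identify this as the main obstacle and resolve it correctly, either by bounding $s+r$ below by a positive constant $c$ (justified through the positivity theorem) or, equivalently, one could use the feasible-region bound $0\le m\le s+r$ from $\Lambda$ to get $\frac{m}{s+r}\le 1$ away from the singular set. You also supply the step the paper leaves implicit --- passing from bounded Jacobian entries to the Lipschitz estimate $\lVert F(X_1)-F(X_2)\rVert\le L\lVert X_1-X_2\rVert$ via the mean value theorem, which requires the domain to be convex or at least well-chosen. So your argument buys a complete proof where the paper's is a sketch with an unacknowledged gap at the singular term; the paper's version buys brevity and a concrete display of the easy derivative bounds (though note its entry $\frac{\partial f}{\partial r}=-\beta_s s^2$ is a slip --- it should be $-\beta_s s$, harmless since both are bounded).
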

\begin{proof}
This theorem from Sowole et al. (2019) \cite{sowole2009} shows that the suggested Lipschitz conditions must be followed to make sure that there is and only exists one solution in a certain region, denoted as $S$. Let's take
\begin{align}
   f(s,r,m,a)= \frac{ds}{dt}&=\beta_s s\left(1-(s+r)\right)-\eta sm-\alpha sa-\sigma s a \label{equ:un equation1}\\
  g(s,r,m,a)=   \frac{dr}{dt}&=\beta_r r(1-(s+r))-\eta rm+\alpha sa \label{equ:un equation2}\\
   h(s,r,m,a)=  \frac{dm}{dt}&=\beta_m m\left(1-\frac{m}{(s+r)}\right)\label{equ:un equation3}\\
  p(s,r,m,a)=   \frac{da}{dt}&=\mu(1-a)\label{equ:un equation4}
\end{align}

The following is the partial derivatives of $f,g,h,p$ concerning variables $s,r,m,a$, as calculated using the system's equation from above:
 \begin{equation*}
	\frac{\partial f}{\partial s}=(1-2s-r)\beta_s-\eta m -(\alpha +\sigma) a
	\hspace{0.5 cm} 
	\bigg\lvert\frac{\partial f}{\partial s}\bigg\rvert= \bigg\lvert (1-2s-r)\beta_s-\eta m -(\alpha +\sigma) a \bigg\rvert <\infty
\end{equation*}

\begin{equation*}
	\frac{\partial f}{\partial r}=-\beta_s s^2
	\hspace{1.2 cm} 
	\bigg\lvert\frac{\partial f}{\partial r}\bigg\rvert= \bigg\lvert -\beta_s s^2 \bigg\rvert=\beta_s s^2 <\infty
\end{equation*}
\begin{equation*}
	\frac{\partial f}{\partial m}=-\eta s
	\hspace{1.2 cm} 
	\bigg\lvert\frac{\partial f}{\partial m}\bigg\rvert= \bigg\lvert -\eta s\bigg\rvert= \eta s <\infty
\end{equation*}

\begin{equation*}
	\frac{\partial f}{\partial a}= -(\alpha+\sigma)s
	\hspace{1.2 cm} 
	\bigg\lvert\frac{\partial f}{\partial a}\bigg\rvert= \bigg\lvert -(\alpha+\sigma)s\bigg\rvert= (\alpha+\sigma)s <\infty
\end{equation*}

Again,
\begin{equation*}
	\frac{\partial g}{\partial s}=\alpha a-\beta_r r
	\hspace{0.9 cm} 
	\bigg\lvert\frac{\partial f}{\partial s}\bigg\rvert= \bigg\lvert \alpha a-\beta_r r \bigg\rvert <\infty
\end{equation*}

\begin{equation*}
	\frac{\partial g}{\partial r}=(1-s-2r)\beta_r-\eta m
	\hspace{0.6 cm} 
	\bigg\lvert\frac{\partial f}{\partial r}\bigg\rvert= \bigg\lvert (1-s-2r)\beta_r-\eta m \bigg\rvert <\infty
\end{equation*}
\begin{equation*}
	\frac{\partial g}{\partial m}=-\eta r
	\hspace{1.2 cm} 
	\bigg\lvert\frac{\partial f}{\partial m}\bigg\rvert= \bigg\lvert -\eta r\bigg\rvert= \eta r <\infty
\end{equation*}

\begin{equation*}
	\frac{\partial g}{\partial a}= \alpha s
	\hspace{1.2 cm} 
	\bigg\lvert\frac{\partial f}{\partial a}\bigg\rvert= \bigg\lvert \alpha s \bigg\rvert= \alpha s <\infty
\end{equation*}

In a similar fashion we can readily demonstrate that,
\begin{align}
& \abs{\frac{\partial h}{\partial s}},\abs{\frac{\partial h}{\partial r}},\abs{\frac{\partial h}{\partial m}},\abs{\frac{\partial h}{\partial a}}<\infty \\
& \abs{\frac{\partial p}{\partial s}},\abs{\frac{\partial p}{\partial r}},\abs{\frac{\partial p}{\partial m}},\abs{\frac{\partial p}{\partial a}}<\infty
\end{align}
Therefore, it has been shown that every partial derivative exhibits both continuity and boundedness. This confirmation verifies that the requirements specified by the Lipschitz criterion have been met. Consequently, based on the ideas outlined in
Sowole et al. (2019) and Kundu and Mallick (2023) \cite{sowole2009,Kundu2023}, there is a unique solution for the system (\ref{equ:model equation1}-\ref{equ:model equation4}) only inside the specified region $S$. This assumption is substantiated by the evidence provided.
\end{proof}
Therefore, the existence, positivity and boundedness of the solutions of the proposed
model (\ref{equ:model equation1}-\ref{equ:model equation4}) ensures that the model has a mathematical and biological meaning.
\subsection{Equilibrium Point}
To find the equilibrium point of the proposed model, we solve the system of equations derived from the model's structure. Our equilibrium point is obtained by setting the derivatives to zero as follows:
\begin{align*}
& \beta_s s\left(1-(s+r)\right)-\eta sm-\alpha sa-\sigma s a =0\\ 
& \beta_r r(1-(s+r))-\eta rm+\alpha sa  =0\\
& \beta_m m\left(1-\frac{m}{(s+r)}\right) =0\\ 
&\mu(1-a)=0
\end{align*}
Thus, after solving the algebraic equations above, we obtained the five non-negative equilibrium points. They are
\begin{align*}
    E_1 &=\left(0,0,0,1\right) \\
     E_2 &=\left(0,1,0,1\right)\\
      E_3 &=\left(\frac{\beta_s(\alpha b_1-b)+(\alpha+\sigma)b}{\beta_s b},\frac{\alpha b_1}{b},0,1\right) \\
       E_4 &=\left(0,\frac{\beta_r}{\beta_r+\eta},\frac{\beta_r}{\beta_r+\eta},1\right) \\
        E_5 &=\left(\frac{-b_1\left(\beta_r\alpha+\beta_r\sigma+\eta\psi\right)}{(\beta_s+\eta)(b-\tau)},\frac{-b_1\alpha}{-b+\eta\tau},\frac{-b_1}{\beta_s+\eta},1\right)
\end{align*}
where $b=\beta_r(\alpha+\sigma)-\beta_s\alpha$, $b_1=\alpha+\sigma-\beta_s$, $\tau=\beta_s-\beta_r-\sigma$ and $\psi=\alpha-\tau$ with $\alpha<\beta_s-\sigma$, $\beta_r(\alpha+\sigma)<\beta_s\alpha$  and $\beta_s\alpha b_1>b$.
\subsection{Status of Resistant Bacteria}
Through the use of the next generation matrix approach highlighted in Diekmann et al. (1990) \cite{Diekmann1990OnTD}, we would be able to figure out the upcoming scenario of resistant bacteria. For the purpose of discussion, the state variable $r$ has been taken into consideration in order to investigate the dynamical behavior of bacteria that are resistant to antibiotics and pose a risk to the immune system of a human being. Now, equation (\ref{equ:model equation2}) tells
\begin{equation}
   \frac{dr}{dt}=\beta_r r[1-(s+r)]-\eta rm+\alpha sa
\end{equation}
Differentiating it with respect to $r$, we get
\begin{equation}
    \Longrightarrow \frac{d}{dr}\left(\frac{dr}{dt}\right)_{s=s^*,r=r^*,m=m^*,a=a^*}=\beta_r -(\beta_r s^*-2\beta_rr^*-\eta m^*)
\end{equation}
Therefore, two matrices $M$ and $D$ which represent the boosting and falling number of resistant bacteria in body are $M=\beta_r$ and $D=\beta_r s^*-2\beta_rr^*-\eta m^*$.\\
Then the future situation of guava borer would be $R_0$ will be
\begin{equation}
    R_0=\frac{M}{D}=\frac{\beta_r}{\beta_r s^*-2\beta_rr^*-\eta m^*}
\end{equation}
Thus, the immune system  will be improved by declining resistant bacteria if  $\beta_r< \beta_r s^*-2\beta_rr^*-\eta m^*$. But, the immune system gradually worsens by rising resistant bacteria if $\beta_r> \beta_r s^*-2\beta_rr^*-\eta m^*$.

\subsection{Stability Analysis}
To find the stability of our proposed model (\ref{equ:model equation1})-(\ref{equ:model equation4}) at the equilibrium point $(s^*,r^*,m^*,a^*)$, firstly we consider
\begin{align}
& f_1(s^*,r^*,m^*,a^*)=\beta_s s^*\left(1-(s^*+r^*)\right)-\eta s^*m^*-\alpha s^*a^*-\sigma s^* a^* =0
\label{Eq:s1}
\\ 
& f_2(s^*,r^*,m^*,a^*)=\beta_r r^*(1-(s^*+r^*))-\eta r^*m^*+\alpha s^*a^*  =0
\label{Eq:s2}
\\
& f_3(s^*,r^*,m^*,a^*)=\beta_m m^*\left(1-\frac{m^*}{(s^*+r^*)}\right) =0
\label{Eq:s3}
\\ 
& f_4(s^*,r^*,m^*,a^*)=\mu(1-a6*)=0
\label{Eq:s4}
\end{align}
For the equation (\ref{Eq:s1}) -- (\ref{Eq:s4}), the Jacobian matrix is 
\begin{equation}
J = \frac{\partial(f_1, f_2, f_3, f_4)}{\partial(s^*, r^*, m^*, a^*)} = 
\begin{pmatrix}
\frac{\partial f_1}{\partial s^*} & \frac{\partial f_1}{\partial r^*} & \frac{\partial f_1}{\partial m^*} & \frac{\partial f_1}{\partial a^*} \\
\frac{\partial f_2}{\partial s^*} & \frac{\partial f_2}{\partial r^*} & \frac{\partial f_2}{\partial m^*} & \frac{\partial f_2}{\partial a^*} \\
\frac{\partial f_3}{\partial s^*} & \frac{\partial f_3}{\partial r^*} & \frac{\partial f_3}{\partial m^*} & \frac{\partial f_3}{\partial a^*} \\
\frac{\partial f_4}{\partial s^*} & \frac{\partial f_4}{\partial r^*} & \frac{\partial f_4}{\partial m^*} & \frac{\partial f_4}{\partial a^*}
\end{pmatrix}
\label{Eq:jaco}
\end{equation}
Thus, the Jacobian matrix $J$ for the proposed model at the equilibrium point using equation \eqref{Eq:jaco} is given by\\
$J_{\arrowvert s^*, r^*, m^*, a^*}=$
\begin{equation}
\resizebox{\textwidth}{!}{$
\begin{bmatrix}
	\beta_s - 2\beta_s s^* - \beta_s r^* - \eta m^* - \alpha a^* - \sigma a^*  & \beta_s s^* & -\eta s^*  & -s^*\alpha - s^*\sigma \\
	\beta_r s^* + \alpha a^* & \beta_r - \beta_r s^* - 2\beta_r r^* - \eta m^* & -\eta r^* & s^* \alpha \\
	\frac{\beta_m m^{*2}}{(s^* + r^*)^2} & \frac{\beta_m m^{*2}}{(s^* + r^*)^2}  & \beta_m - \frac{2\beta_m m^*}{s^* + r^*}  & 0\\
	0 & 0 & 0 & -\mu
\end{bmatrix}
$}
\end{equation}
Now we will examine whether the equilibrium points of the model $E_1,E_2,E_3,E_4,E_5$ are stable or not.\\

1. At the equilibrium point $E_1=(0,0,0,1)$ Jacobian matrix is not defined. So the equilibrium point $E_1$ is unstable.\\

2. At the equilibrium point $E_2=(0,1,0,1)$, Jacobian matrix $J_{\arrowvert E_2}$ is,
\begin{equation}
J_{\arrowvert E_2}= \begin{bmatrix}
	-\alpha-\sigma & 0 & 0  & 0 \\
	\alpha-\beta_r & -\beta_r  & -\eta & 0 \\
 0 & 0 &\beta_m & 0\\
 0& 0 & 0 & -\mu
\end{bmatrix}
\end{equation}
Let $\lambda$ be the eigenvalue and $I$ be the identity matrix, then the characteristic equation is,
\begin{equation}
   J_{\arrowvert E_2}= \begin{bmatrix}
	-\alpha-\sigma-\lambda & 0 & 0  & 0 \\
	\alpha-\beta_r & -\beta_r-\lambda & -\eta & 0 \\
 0 & 0 &\beta_m-\lambda & 0\\
 0& 0 & 0 & -\mu-\lambda
\end{bmatrix}=0
\end{equation}
By solving this characteristic equation we get, 
\begin{align*}
   & (-\alpha-\sigma-\delta)(-\beta_r-\lambda)(\beta_m-\lambda)(-\mu-\lambda)=0 \\
   & \Longrightarrow \lambda_1=-\mu, \lambda_2=\beta_m, \lambda_3=-\beta_r, \lambda_4=-\alpha-\sigma 
\end{align*}

Since the value of $\lambda_2$ is $\beta_m$ which is positive, therefore the equilibrium point $(0,1,0,1)$ is unstable.\\

In the same way, applying the idea of Routh Hurwitz criteria, the equilibrium point $E_3$ is unstable but the equilibrium point $E_4$ is asymptotically stable if $\alpha>\beta_s$, whereas $E_5$ is asymptotically stable if $\alpha>\beta_s>\sigma$ and $\beta_r>\beta_s>n$.

\subsection{Characteristics of States Equilibrium points} 
From the equation (\ref{equ:model equation1})-(\ref{equ:model equation4}) of the model, we get a couple of equations of the characteristic function as,

\begin{align}
    f(s^*,r^*,\eta)&=\beta_s s^*[1-(s^*+r^*]-s^*\alpha -s^*\sigma-\eta s^*(s^*+r^*) \\
     g(s^*,r^*,\eta)&=\beta_r r^*[1-(s^*+r^*]-s^*\alpha -\eta r^*(s^*+r^*) 
\end{align}
Now rate of change of sensitive bacteria $(s)$ with respect to immune response $\eta$ is given by,
\begin{align}
    \therefore \frac{ds^*}{d\eta}&=\frac{\begin{vmatrix}
\frac{\partial{ f(s^*, r^*,\eta)}}{\partial{r^*}} & \frac{\partial{  f(s^*, r^*,\eta)}}{\partial{\eta}} \\
\frac{\partial{ g(s^*, r^*,\eta)}}{\partial{r^*}} & \frac{\partial{ g(s^*, r^*,\eta)}}{\partial{\eta}}
\end{vmatrix}}
{\begin{vmatrix}
\frac{\partial{ f(s^*, r^*,\eta)}}{\partial{s^*}} & \frac{\partial{ f(s^*, r^*,\eta)}}{\partial{r^*}} \\
\frac{\partial{  g(s^*, r^*,\eta)}}{\partial{s^*}} & \frac{\partial{ g(s^*, r^*,\eta)}}{\partial{r^*}}
\end{vmatrix}}\\
&=\frac{\frac{\partial f}{\partial r^*}\frac{\partial g}{\partial \eta}-\frac{\partial f}{\partial \eta}\frac{\partial g}{\partial r^*}}{\frac{\partial f}{\partial s^*}\frac{\partial g}{\partial r^*}-\frac{\partial g}{\partial s^*}\frac{\partial f}{\partial r}} \\
& = \frac{d_2 s^{*2}r^{*3}+s{*2}+s^*r^*}{[-c_1-2d_1+d_2r^*][\beta_r-2d_3r^*-d_3s^*]-[\alpha-d_3r^*]d_2s^*} 
\end{align}
where $d_1=\beta_s+\eta,d_2=\beta_s-\eta, d_3=\beta_r+\eta ,c_1=\alpha+\sigma-\beta_s$ 
It is seen that the numerator is always positive but the denominator is negative if $d_2r^*<c_12d_1, \beta_r>d_3s^*+2d_1r^*$ with $\alpha>d_1 r^*$.
For this reason, we conclude that $\frac{ds^*}{d\eta}<0$.
This implies that the derivative of sensitive bacteria with regard to immune response is negative, signifying that an increase in immune response will result in a reduction in the number of sensitive bacteria. Likewise, a reduction in immune response will result in a boost in the number of susceptible bacteria.

Again, by using the relationship between the immune response $\eta$ and the rate of change of resistant bacteria (r), we get

\begin{align}
    \therefore \frac{dr^*}{d\eta}&=\frac{\begin{vmatrix}
\frac{\partial{ f(s^*, r^*,\eta)}}{\partial{\eta^*}} & \frac{\partial{  f(s^*, r^*,\eta)}}{\partial{s^*}} \\
\frac{\partial{ g(s^*, r^*,\eta)}}{\partial{\eta}} & \frac{\partial{ g(s^*, r^*,s)}}{\partial{s^*}}
\end{vmatrix}}
{\begin{vmatrix}
\frac{\partial{ f(s^*, r^*,\eta)}}{\partial{s^*}} & \frac{\partial{ f(s^*, r^*,\eta)}}{\partial{r^*}} \\
\frac{\partial{  g(s^*, r^*,\eta)}}{\partial{s^*}} & \frac{\partial{ g(s^*, r^*,\eta)}}{\partial{r^*}}
\end{vmatrix}}\\
&=\frac{\frac{\partial f}{\partial \eta^*}\frac{\partial g}{\partial s^*}-\frac{\partial f}{\partial s^*}\frac{\partial g}{\partial \eta}}{\frac{\partial f}{\partial s^*}\frac{\partial g}{\partial r^*}-\frac{\partial g}{\partial s^*}\frac{\partial f}{\partial r^*}} \\
& = \frac{(-s^*r^*-s^{*2})(\alpha-d_3r^*)-(-c_-2d_1+d_2r^*)(-r^{*2}-r^*s^*) }{[-c_1-2d_1+d_2r^*][\beta_r-2d_3r^*-d_1s^*]-[\alpha-d_3r^*]d_2s^*} 
\end{align}

Using the previous conditions, $\frac{dr^*}{d\eta}<0$. This points to a negative relationship between the rate of change of resistant bacteria (r) and the immune response. That is, the number of bacteria that are resistant (r) will decrease as the immune response increases. Similarly, the number of bacteria that are resistant to antibiotics (r) will rise if the immune response decreases.\\ In the similar way, it has been also obtained that $\frac{dm^*}{d\eta}>0$ and $\frac{da^*}{d\eta}=0$.\\
However, a summary of the result for characteristics of the equilibrium point with a parameter $\eta$ is presented in Table \ref{tab:summary-characteristics}.
\begin{table}[hbt!]
\centering
\caption{Summary of the result for characteristics of the equilibrium points regarding the parameter}
\begin{tabular}{c c c c}
\hline
\textbf{Parameter} & \textbf{Results} & \textbf{Conditions} & \textbf{Interpretations} \\
\hline
\multirow{4}{*}{$\eta$} 
& $\frac{ds^*}{d\eta} < 0$ 
& \multirow{4}{*}{\begin{tabular}[c]{@{}l@{}}$(\beta_s - \eta)r^* < 2(\alpha + \sigma - \beta_s)$\\
$(\beta_s + \eta)$,\\
$\beta_r > (\beta_r + \eta)s^* + 2(\beta_s + \eta)r^*$\\
$\alpha > (\beta_s + \eta)r^*$
\end{tabular}} 
& \begin{tabular}[c]{@{}l@{}}Increasing the value of \( \eta \) leads to a\\ decrease in the population of\\ sensitive bacteria.\end{tabular} \\
\cline{2-2} \cline{4-4}
& $\frac{dr^*}{d\eta} < 0$ 
&  
& \begin{tabular}[c]{@{}l@{}}A higher bacteria eradication rate\\ by immune cells (\( \eta \)) contributes to\\ a reduction in the resistant\\ bacteria population.\end{tabular} \\
\cline{2-2} \cline{4-4}
& $\frac{dm^*}{d\eta} > 0$ 
&  
& \begin{tabular}[c]{@{}l@{}}The immune cell population incre-\\ases with increasing \( \eta \).\end{tabular} \\
\cline{2-2} \cline{4-4}
& $\frac{da^*}{d\eta} = 0$ 
&  
& \begin{tabular}[c]{@{}l@{}}The change of antibiotic concen-\\ tration remains constant with\\ respect to \( \eta \).\end{tabular} \\
\hline
\end{tabular}
\label{tab:summary-characteristics}
\end{table}
\subsection{Dynamics of Resistant Bacteria under Declining Immune Cells and Monotonically Increasing Antibiotic Concentration}
In this part, we will investigate the dynamics of resistant bacteria in light of the gradually diminishing immune cells and the steadily growing concentration of antibiotics. The fact that antibiotic concentrations are continuing to climb as a result of taking medicines indicates that the levels that are currently present must be higher in comparison to those that were present in the past.\\
With regard to time t, we differentiate both sides of the equation (\ref{equ:model equation2}),
\begin{align}
	\frac{dr}{dt} &=\beta_r r[1-(s+r)]-\eta rm+\alpha sa\\
	\Longrightarrow \frac{d^2r}{dt^2}   & 
	> \beta_r \frac{dr}{dt}(-1-s)-\beta_r r\frac{ds}{dt}-\eta m \frac{dr}{dt}+\alpha a \frac{ds}{dt}
\end{align}
Because of the fact that immune cells are decreasing with rising the concentration of antibiotics in body, so $\frac{dm}{dt}$ must be negative and so $\alpha s \frac{da}{dt}$  and $-\eta r \frac{dm}{dt}$ must be positive term.\\ 
Again for the critical points or equilibrium points,
\begin{equation}
	\frac{ds}{dt}=\frac{dr}{dt}=\frac{dm}{dt}=\frac{da}{dt}=0
\end{equation}
Then we find that, at the equilibrium point
\begin{equation}
\frac{d^2r}{dt^2}>0
\end{equation}
That means at the critical points, the resistant bacteria is minimal.
Biologically, this result shows that the concentration of bacteria that are resistant to antibiotics will grow beyond the levels that were previously seen. This is mostly because continuous intake of antibiotics causes a monotonic decrease in the number of immune cells that are present inside the body.

\section{Numerical Simulations and Illustrations}

This section provides a visual representation of the dynamics of antibiotic uptake in the human body from different perspectives, allowing for improved awareness of how resistant bacteria harm our immune system day by day.\\

The numerical simulations of the proposed model have been accomplished with the help of MATLAB (R2018b). The fourth-order Runge-Kutta algorithm was implemented to show a visualisation of the relation between state variables and model parameters, as well as the illustration of phase portraits for the pertinent variables. The values of the parameters for the simulation have been presented in Table \ref{table:paramters value}.
  \begin{table}[hbtp!]
  \centering
   \caption{Parameter value and their descriptions}
    \label{table4.4.1}
      \begin{tabular}{p{1.4cm} p{6.3cm} p{2.2cm} p{1.0cm} p{1.2cm} }
      \hline
      \textbf{Symbol} & \textbf{Description} & \textbf{Unit} & \textbf{Value}& \textbf{Source}\\
      \hline
      $\beta_s$ & Growth rate of sensitive bacteria & $day^{-1}$ & 0.8& \cite{Mondragón2014}\\
      $\beta_r$ & Growth rate of resistant bacteria & $day^{-1}$ & 0.3&\cite{Daşbaşı2016}\\
      $\beta_m$ & Growth rate of immune cells & $day^{-1}$ & 0.6&\cite{Daşbaşı2016}\\
     $\eta$ & Bacteria eradication rate by immune cells & $day^{-1}$ & 0.3& \cite{khananik2020}\\
      $\omega$ & Rate of amount of present bacteria of carrying capacity of immune cells& $day^{-1}$ & 0.3&\cite{Daşbaşı2016}\\
     $\alpha$ & Mutation rate of sensitive bacteria & $mut \times gen$ & $0.3$& \cite{Daşbaşı2016}\\
       $\sigma$ & Elimination rate of sensitive bacteria & $day^{-1}$ & $0.0039$& \cite{khananik2020}\\
       $\delta$ & Dosage of antibiotic daily
 & $mgkg^{-1}day^{-1}$ & $0.0039$&\\
  $\mu$ & uptake rate of antibiotic
 & $day^{-1}$ & $0.06$& \cite{Daşbaşı2016}\\
 
$C$ & Carrying capacity of bacteria
 &  & $10^{9}$& \cite{Mondragón2014}\\
 %$B_1$ & Weight representing the importance of maximizing immune cell concentration
 %&  & $10$ & Estimated\\
 %$B_2$ & Weight penalizing the presence of resistant bacteria.
 %&  & $5$& Estimated\\
 %B_3$ & Weight reflecting the cost or adverse effects of antibiotic usage
 %&  & $5$& Estimated\\
      \hline
    \end{tabular}
    \label{table:paramters value}
    \end{table}

In most common situations, antibiotics are typically prescribed for a duration of 14 days for various bacterial diseases. Patients with less severe infections are typically prescribed antibiotics for 5-7 days, while patients with more severe infections are prescribed antibiotics for 3-4 weeks, depending on the results of lab tests. Since the majority of patients are required to take antibiotics for 2 weeks, or 14 days, we have numerically solved the model for this duration and assumed that the patient would take 3 doses of antibiotics per day, with each dose administered at an 8-hour interval.

Furthermore, it has been considered that at the beginning of the antibiotic course, $a(0)=0$, the initial values of dimensionless variables sensitive bacteria (s), resistant bacteria (r), and immune cells in the body (m) are $s(0)=0.8,r(0)=0.1$ and $m(0)=0.3$, respectively.

There is a wide range of dosage strengths for antibiotics sold in pharmacies; for example, 5 mg, 50 mg, or 500 mg. Depending on the patient's age, health, and infection quality, the doctor can decide which one is best for that patient.  The model has been solved for a lower power antibiotic dosage here, and then a greater power dose would follow. An increased value of the antibiotic consumption rate $\mu$ is the consequence of a high value when the dosage of the antibiotic is boosted.
\begin{figure}[hbt!]
	\centering
	\includegraphics[height=6.3cm,width=10.8cm]{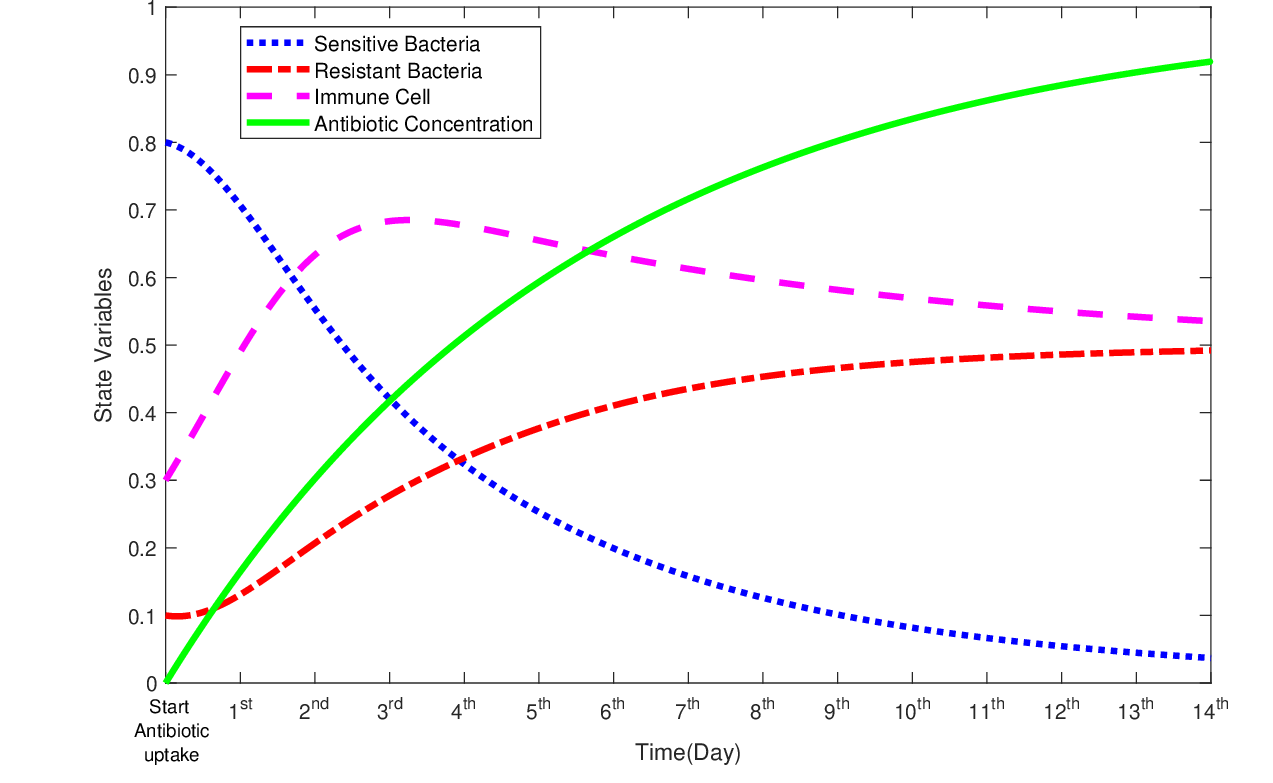}
	\caption{Dynamics of sensitive bacteria, resistant bacteria, immune cells, and antibiotic concentration up to the first $14^{th}$ days at antibiotic intake rate $\mu=0.06$  $day^{-1}$}.
	\label{fig: trajectory mu = 0.06}
\end{figure}

Now, the Figure \ref{fig: trajectory mu = 0.06} illustrates the behaviour of several state variables under the influence of a lower antibiotic dosage, characterised by an absorption rate of $\mu=0.06$ $day^{-1}$.

From this figure, it is evident that antibiotic concentration is increasing from 1-14 days with respect to time, and sensitive bacteria are decreasing as a result of the antibiotic effect. Because antibiotic kills sensitive bacteria with time. As we have used the antibiotic of less power in this figure, so the number of sensitive bacteria is decreasing slowly. At the same time, from this figure, we observed that resistant bacteria are increasing with time as the antibiotic concentration is increasing. Antibiotics are known to be ineffective against bacteria that have developed resistance. Additionally, it will lead to the evolution of bacteria that are resistant to previous strains. There has been an upsurge in the number of bacteria that are resistant to these treatments.

On the contrary, the amount of immune cells first rises to defend the body against bacterial invasion, reaching a high of 0.685 at around the fourth day of taking antibiotics.  Following the peak point, there has been a progressive decline in the number of immune cells, although at a slow rate. By the conclusion of the antibiotic treatment time, the remaining amount of antibiotic would be reduced to 0.53. Furthermore, it has been documented that the concentration of antibiotics in the body enhances simultaneously throughout the course of treatment, reaching a level of 0.919 during a period of 14 days.\\

In the meantime, with a higher antibiotic dose-based absorption rate $\mu=0.12$, the dynamical change of sensitive bacteria, resistant bacteria, immune cells, and antibiotic concentration are depicted in Figure \ref{fig: trajectory mu = 0.12}. It is apparent from this picture that the concentration of the antibiotic would grow rapidly at higher doses, as indicated. Therefore, the number of sensitive bacteria in our bodies will decrease more rapidly than shown in Figure \ref{fig: trajectory mu = 0.06} because using higher-strength antibiotics would kill these bacteria faster than before. Meanwhile, as shown in Figure \ref{fig: trajectory mu = 0.12}, the number of resistant bacteria is on the rise compared to when a lower dosage of antibiotics was used. Adding more dosages speeds up the mutation process by which bacteria that were formerly sensitive become resistant. Also, the immune cell count drops even more than it was following the lower dosage of antibiotic. This is because the beneficial bacteria already present in our colon are more quickly killed by stronger antibiotics. The rate of immune cells after the antibiotic period is 0.51 nearly in Figure \ref{fig: trajectory mu = 0.12}, compared to 0.54 almost in Figure \ref{fig: trajectory mu = 0.06}. 
\begin{figure}[hbt!]
	\centering
	\includegraphics[height=8.2cm,width=11.5cm]{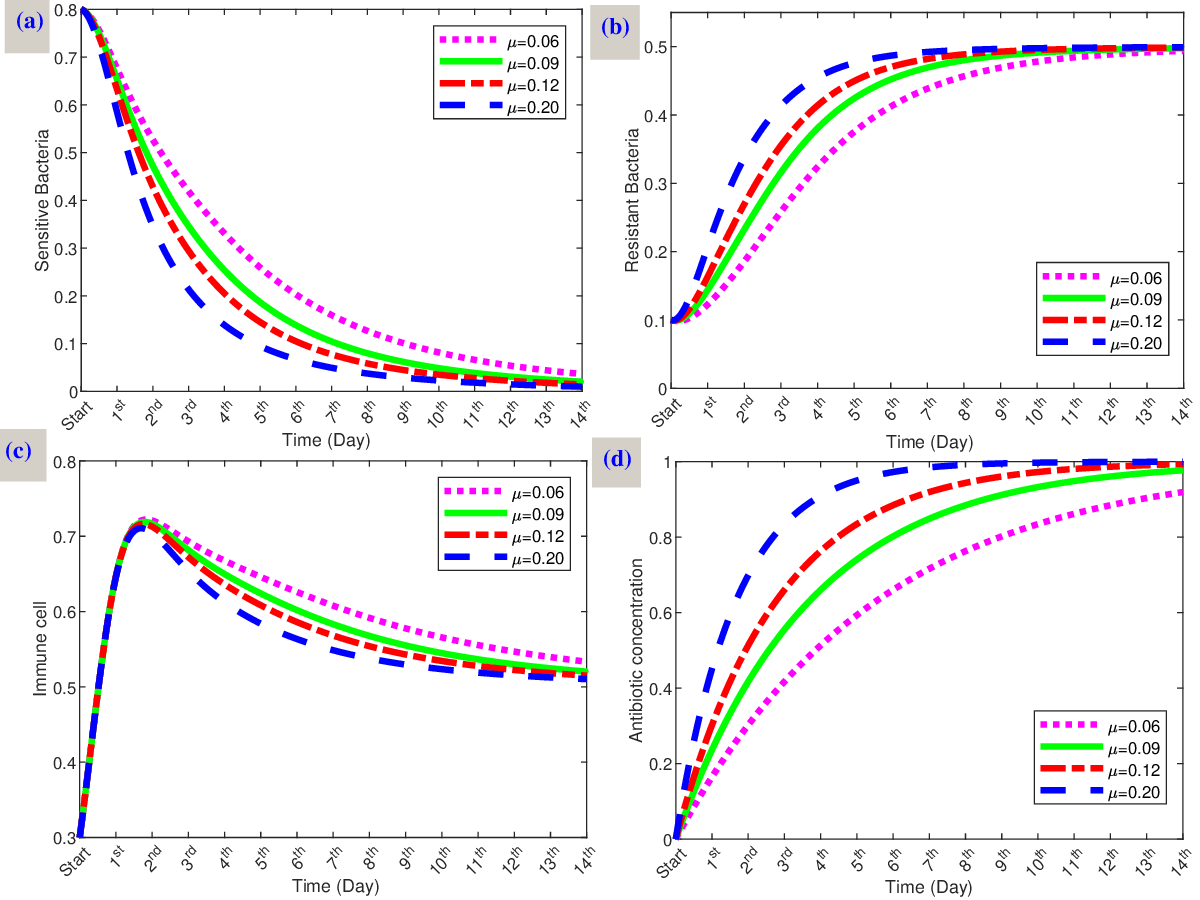}
	\caption{ (a) The larger the uptake rate through higher doss of antibiotics, the more sensitive bacteria are reduced. (b) More antibiotics are ingested in, leading to an increase in antibiotic-resistant bacteria. (c) Despite there may not be any noticeable change within a few days but increased rate of antibiotic absorption results in a greater reduction of immune cells.  (d) The concentration of antibiotics in the body rises as a result of taking antibiotics at higher doses.}
	\label{fig: mu variation}
\end{figure}

Concurrently, the dynamical variation in the state variables about $\mu$ are displayed in Figure \ref{fig: mu variation} where $\mu$ denotes the antibiotic uptake rate, and its value signifies the dosage concentrations of the antibiotics.
The sensitive bacteria number has started to decline for all $\mu$ values, as shown in Figure \ref{fig: mu variation} (a),  but the largest drop has been seen at $\mu=0.20$, which is much lower than the other lower values of $\mu$. Following the completion of 14 days of antibiotic treatment, it was also noted that the drug-sensitive bacteria's $\mu$ value steadily decreased to almost zero. \\
In contrast, Figure \ref{fig: mu variation}(b) depicts that initially, for all values of $\mu$, resistant bacteria have been increasing. However, the rate of boosting was very quick when $\mu=0.20$, and it only takes $5$ days for resistant bacteria to reach $0.5$, while other values reach their highest levels after $6^th$ day. Once again, looking at Figure \ref{fig: mu variation}(c), it is clear that none of the $\mu$ values significantly affect the immune cells, which reach a high of $0.715$ within $2^{nd}$ day. Interestingly, a significant change has been seen for immune cell reductions relative to $\mu$ after passing the peak point. Immune cells degrade more slowly when $\mu$ is small. Lastly, Figure \ref{fig: mu variation} (d) demonstrates a significant increase in antibiotic concentration as $\mu$ increases. At the maximum uptake rate of $\mu=0.20$, it turns out that the concentration remains constant after the seventh day, but other values of $\mu$ need more time to reach this consistent status.

\begin{figure}[hbt!]
	\centering
\includegraphics[height=6.3cm,width=10.8cm]{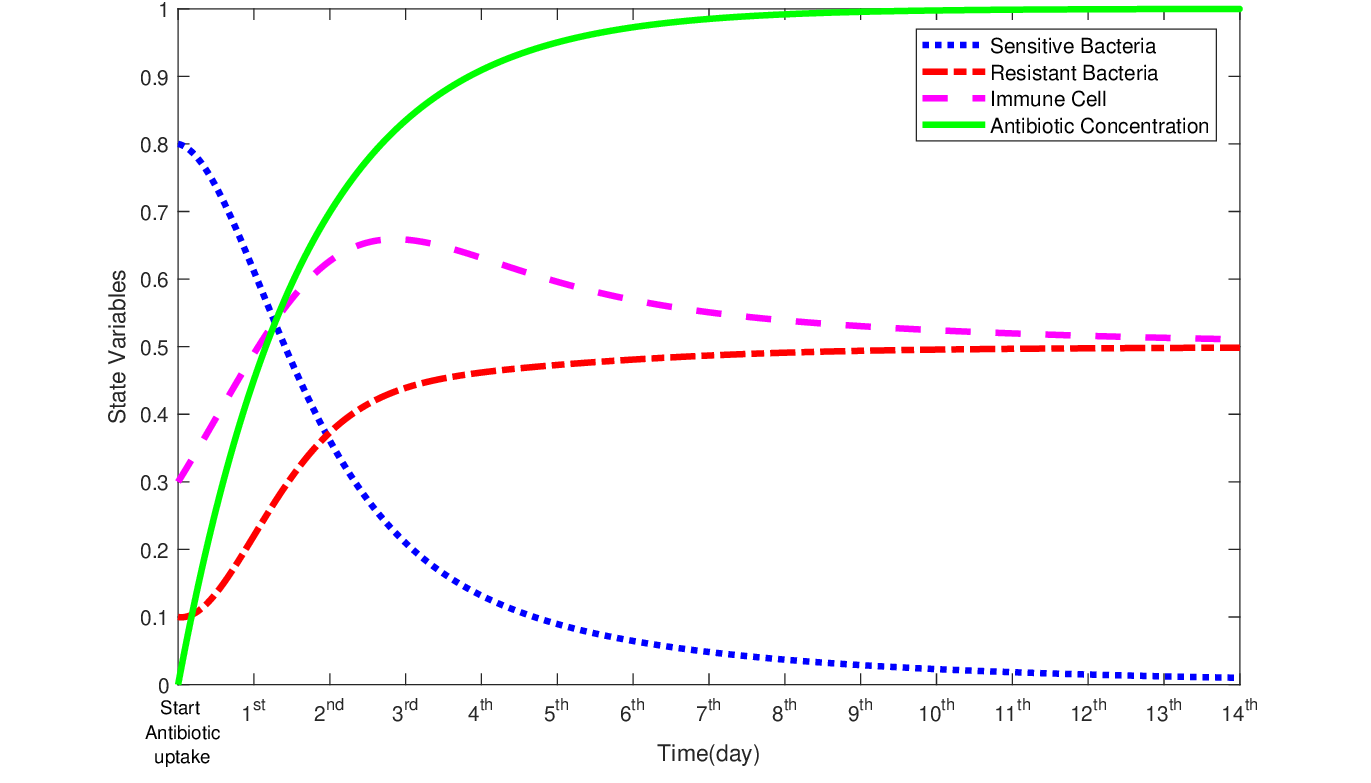}
	\caption{At a greater antibiotic intake rate of $\mu=0.12$ $day^{-1}$, susceptible and resistant bacteria, immune cells, and antibiotic concentrations varied over the first 14 days.}
	\label{fig: trajectory mu = 0.12}
\end{figure}
\begin{figure}[hbt!]
	\centering
	\includegraphics[height=8.4cm,width=11.3cm]{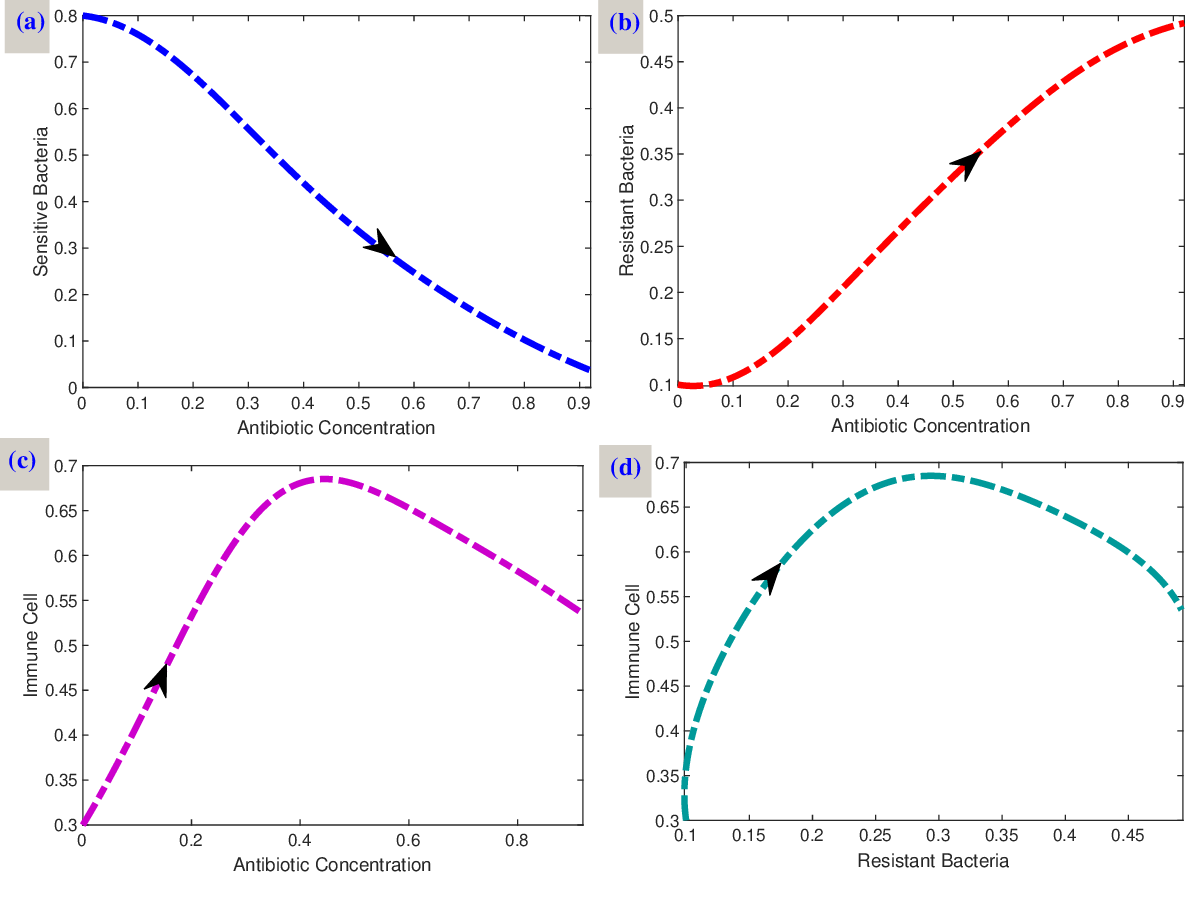}
	\caption{ (a) Even while the amount of sensitive bacteria decreases at first owing to the lower concentration of antibiotics, the rate of reduction has accelerated due to the increased concentration of antibiotics.
 (b) An enhanced concentration of antibiotics in the body leads to a rise in the population of antibiotic-resistant bacteria. (c) Antibiotics can enhance immune cells at first, though they eventually suppress them by destroying beneficial bacteria and encouraging the growth of antibiotic-resistant strains. (d) A higher value  of resistant bacteria at first makes immune cells go up, but then they start to go down because more resistant bacteria make it challenging for immune cells growth.}
	\label{fig: phase diagram general}
\end{figure}
Besides, the phase diagram shown in Figure \ref{fig: phase diagram general} reveals the qualitative dynamics of our model via simulations, particularly the relationship between various variables. Figure \ref{fig: phase diagram general}(a) illustrates the inverse relationship between sensitive bacteria and the levels of antibiotics in the body. From an antibiotic concentration of 0 to 0.1, there is a little drop in sensitive bacteria. However, as the antibiotic dosage exceeds 0.1, the sensitive bacteria decisively decline and finally a value of 0.037 when the antibiotic concentration grows to 0.91.
On the reverse side, Figure \ref{fig: phase diagram general}(b) displays a contrasting scenario to the preceding picture, where there is a direct relation between antibiotic concentration and resistant bacteria. In this case, the abundance of antibiotic concentration corresponds directly with the emergence of antibiotic-resistant bacteria in the body. Afterward, Figure \ref{fig: phase diagram general}(c) demonstrates an upward trend in immune cells in response to the escalating antibiotic concentration When antibiotic concentration is around 0.45, the peak of immune cells is seen. However, after time situation, the immune cells have declined despite the rising level of antibiotics. This is because antibiotics not only kill disease sensitive bacteria but also eliminate beneficial bacteria that enhance the immune system.  
Besides, a similar trend has been viewed about the connection between immune cells and resistant bacteria in Figure \ref{fig: phase diagram general} (d), whose pattern is similar to Figure \ref{fig: phase diagram general} (c).

Furthermore, the body's immune system plays an essential part in recovering from illness. Consequently, the impact of changes in various immune system situations on the model's different variables  has to be investigated. In this track, Figure \ref{fig: beta variation} displays the fluctuation of the various immune cell growth rates that characterize the status of the immune system, which may be weak or powerful. 
Before $3^{th}$ days, a lower value of $\beta=0.2$ resulted in reduced reductions of sensitive bacteria depicted in Figure \ref{fig: beta variation} (a), but the situation has turned around after this point. Due to the lower $\beta_m$ values, the sensitive bacteria have been drastically reduced in this situation. However, it is remarkable that a lower value of $\beta$ leads to a larger abundance of remaining sensitive bacteria in the end outcome.

\begin{figure}[hbt!]
	\centering	\includegraphics[height=8.2cm,width=11.5cm]{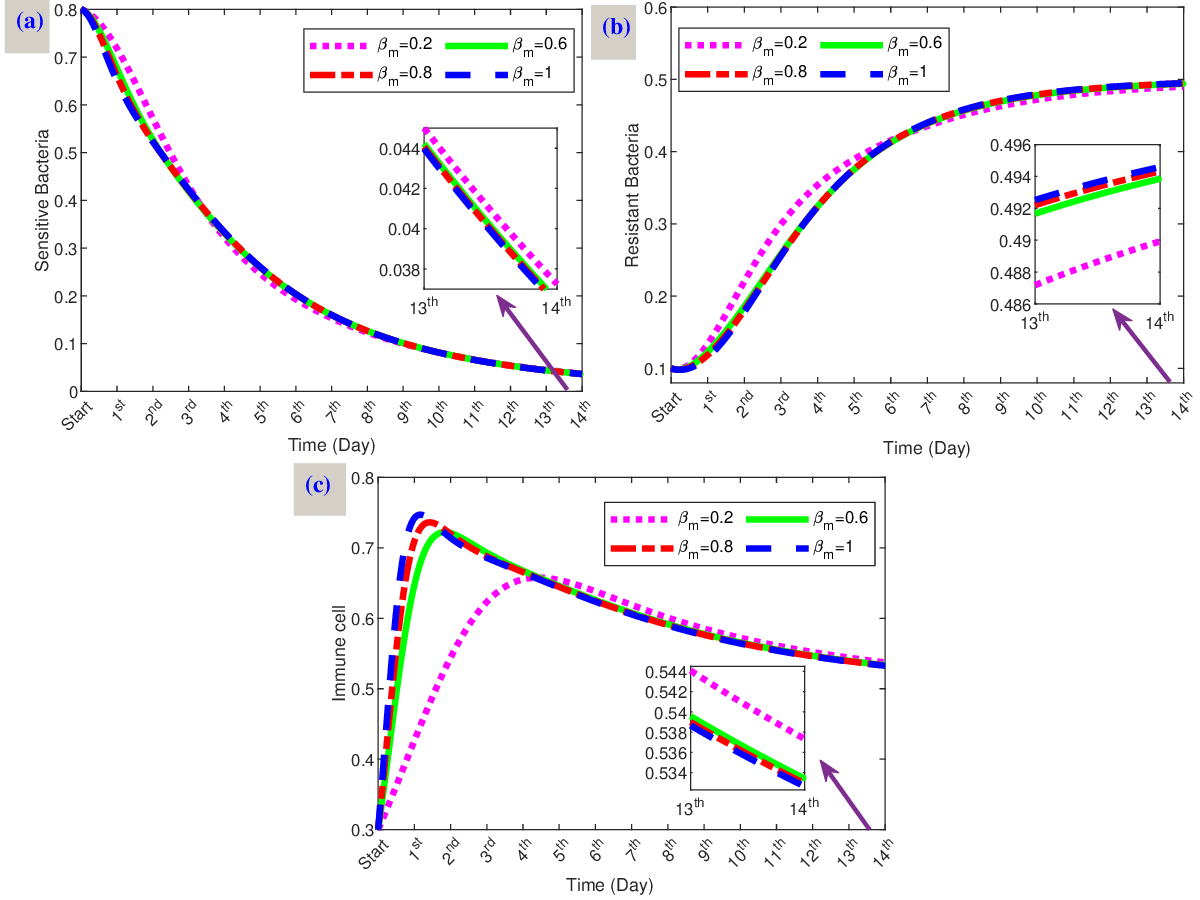}
	\caption{(a) The decline of sensitive bacteria is facilitated by a higher rate of immune cell growth. (b) Slower immune cell growth at first enhances the number of resistant bacteria, but as time goes on, faster growth rate shows that there are more resistant bacteria because of the accumulation of antibiotics. (c) An rise in the growth rate of immune cells initially results in a greater quantity of immune cells, but as time passes, the reverse scenario is seen. }
	\label{fig: beta variation}
\end{figure}

Figure \ref{fig: beta variation} (b) points out that for the lower value of $\beta=0.2$, the enhancement of resistant bacteria is more pronounced up to the sixth day, whereas there are no significant differences seen for other values of $\beta_m$. But over time, this scenario has faded. Interestingly, towards the conclusion of the antibiotic-taking time, resistance in bacteria is also larger when $\beta_m$ is higher. Therefore, an increased rate of immune cell growth does not always result in a reduction in the presence of resistant bacteria due to the antibiotic effect.

Again, the data in Figure \ref{fig: beta variation} (c) demonstrates that the immune cells exhibit a rapid increase and reach a greater peak when the value of $\beta_m$ is bigger, specifically for $\beta_m=1$, compared to the values of $\beta_m=0.8$ and $\beta_m=0.6$. Subsequently, the immune cell has reduced, whereas higher values of $\beta_m$ correspond to a greater reduction. On the other hand, when $\beta_m$ is at its lowest value of 0.2, the immune cell reaches its highest concentration on the $5^{th}$ day. However, the drop in level is not as major as before, maybe because there are fewer resistant bacteria present compared to when $\beta_m$ was greater. On the flip side, when $\beta_m=0.2$, the peak occurred at a slower rate, particularly on the fifth day of the prescription. At the end of the prescription, the level of immune cells remained higher compared to the other values of $\beta$.

When we execute our previous MATLAB code implementing the Simpson one-third rule to solve the integration problem defined in equation \ref{eq:Integration} using the parameters values $B_1=10, B_2=5$ and $B_3=5$ , it has come to light that the changes in the values of our objective function corresponding to the variations in $\mu$ (representing the antibiotic power) and $\beta_m$ (indicating the immune system strength). These variations are presented in Table \ref{table:integration values}. Based on the information presented in the table, it can be concluded that when $\mu=0.06$ and $\beta_m=1.0$, the value of our objective function reaches its maximum at $298.99$. Furthermore, when the value of $\mu$ increases, the cost of antibiotics along with the prevalence of resistant bacteria likewise increases but unfortunately the value of our objective function has fallen. Additionally, it has been noted that when the immune system is weakened, a greater dosage of antibiotics is unable to enhance our objective function values. Remarkably, even with a greater value of $\beta_m$, the effectiveness of stronger antibiotics will not enhance our objective function's values.
 \begin{table}[hbt!]
  \centering
   \caption{Functional value of equation (\ref{eq:Integration}) for variation of $\mu$ and $\beta_m$ when $B_1=10, B_2=5$ and $B_3=5$}
    \begin{tabular}{c|c|c|c|c|c}
      %\hline
      \backslashbox{\textbf{$\beta_m\downarrow$}}{$\mu\rightarrow$}& \textbf{$\mu=0.06$} & \textbf{$\mu=0.09$} & \textbf{$\mu=0.12$} & \textbf{$\mu=0.15$}& \textbf{$\mu=0.20$}\\
      \hline
      $\beta_m=0.2$ & 266.39& 225.49&201.86&186.79 & 171.16\\
      \hline
       $\beta_m=0.4$ & 286.64& 244.99& 220.89&205.47 & 171.17\\
      \hline
       $\beta_m=0.6$ & 293.48& 251.69& 227.52& 212.06 & 195.96\\
      \hline
       $\beta_m=0.8$ & 296.92& 255.07& 230.89& 215.42 & 199.30\\
      \hline
       $\beta_m=1.0$ & \textcolor{blue}{298.99}& 257.11& 232.93& 217.45 & 201.32\\
      %\hline
    \end{tabular}
    \label{table:integration values}
    \end{table}
Moreover, it is normally observed that people often discontinue taking antibiotics just before few days of the doctor's recommended period or keep taking them for a few days after the specified period has passed. Here, our proposed floor function (see in equations (\ref{equ:under_model equation1}- \ref{equ:under_model equation4}) and (\ref{equ:over_model equation1}-\ref{equ:over_model equation4})) provide similar outcomes in scenarios where there are instances of underdosing and overdosing. But, if the prescribed time frame is 14 days and the interval for taking is either fewer than 7 days or more than 28 days, then this floor function won't provide the desired outcome. As a result, our introduction of the new floor function has shown the advantage of describing circumstances mathematically within a shorter period (as illustrated in subsecti \ref{se:under} and \ref{se:over}) from the prescribed time interval.

\subsection{Effect for Under-dose Treatment}\label{se:under}

The majority of people have a common tendency to fulfil the recommended antibiotic dosages after they think that their health has improved and they are no longer experiencing symptoms of the illness. When a patient discontinues the use of antibiotics before completing the whole course, as advised by the doctor, it is referred to as the underdosing of antibiotics. Assuming that a doctor prescribes a 14-day course of antibiotics. If the patient takes antibiotics for a duration of fewer than 14 days, it is referred to as an under-dose of antibiotics. However, this phenomenon poses a serious threat to human health, which is illustrated graphically in this part through the introduction of a new floor function into the preceding model.\\
\begin{figure}[hbt!]
	\centering
	\includegraphics[height=6.3cm,width=11cm]{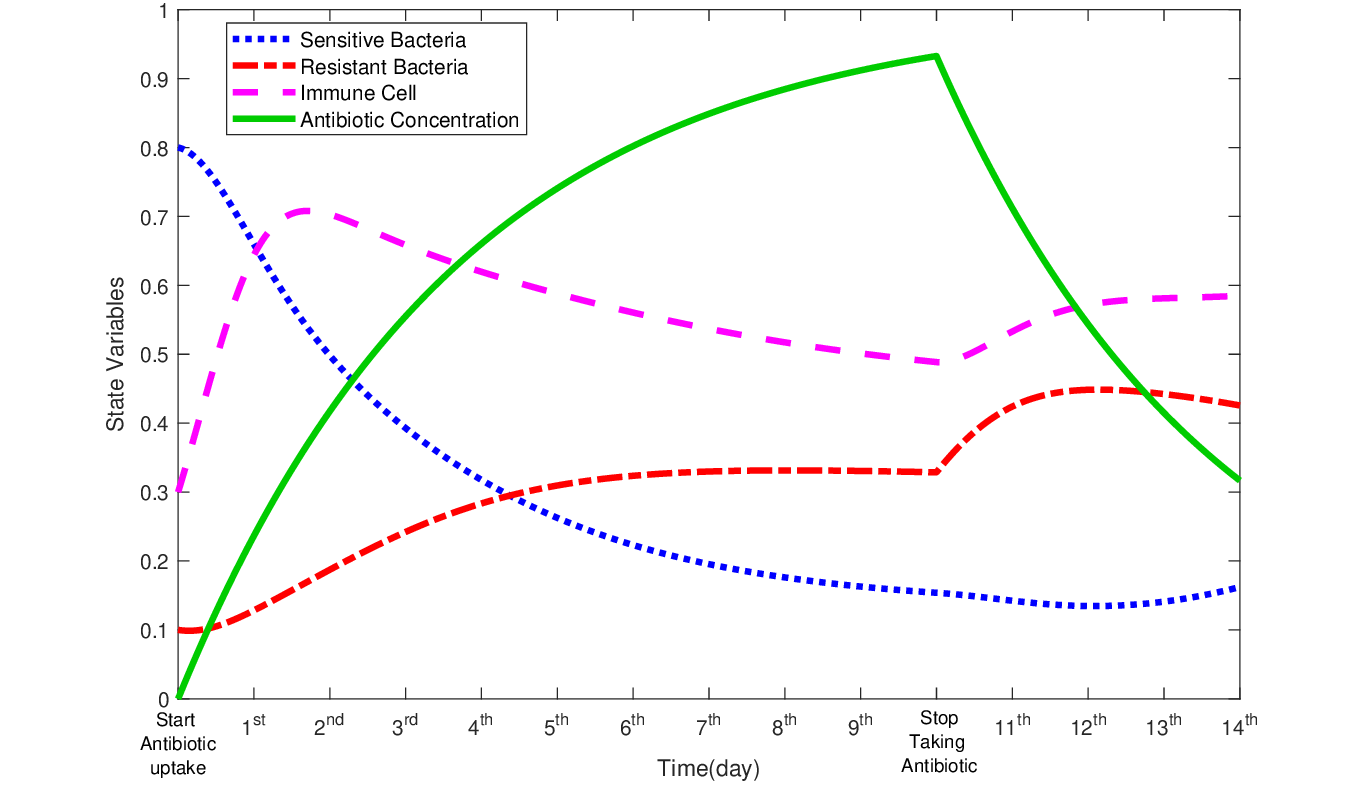}
	\caption{Trends in sensitive bacteria, resistant bacteria, immune cells, and antibiotic concentration during the first 14 days of taking an antibiotic at a rate of $\mu=0.06$ $day^{-1}$ when the prescription antibiotics are stopped 10 days before the end of the dose.}
	\label{fig: under-dose}
\end{figure}
In order to do this, we have made modifications to our model (\ref{equ:under_model equation1})-(\ref{equ:under_model equation4}) by incorporating a new function $\floor*{\frac{t}{T_d}}$ into the variables representing the change in resistant bacteria and antibiotic concentration. This function specifically emphasizes the impact of under-dosing antibiotics. According to this, our newly proposed model would resemble the following:
\begin{align}
    \frac{ds}{dt}&=\beta_s s\left(1-(s+r)\right)-\eta sm-\alpha sa-\sigma s a \label{equ:under_model equation1}\\
    \frac{dr}{dt}&=\beta_r r(1-(s+r))-\eta rm+\alpha sa +\kappa a \floor*{\frac{t}{T_d}} \label{equ:under_model equation2}\\
    \frac{dm}{dt}&=\beta_m m\left(1-\frac{m}{(s+r)}\right)\label{equ:under_model equation3}\\
    \frac{da}{dt}&=\mu\left(1-\floor*{\frac{t}{T_d}}\right)-\mu a\label{equ:under_model equation4}
\end{align}
To point out the effects of under-dosing in the treatment period, a \textit{floor function} is newly introduced and used as a \textit{switch function} in the model equations. For the under-dosing scenario, we set the treatment duration threshold as \( T_d = 10 \) days. The floor function is defined as $\left\lfloor \frac{t}{T_d} \right\rfloor$. This function behaves as a binary switch because
\begin{itemize}
    \item When \( t < T_d \), we have \( \left\lfloor \frac{t}{T_d} \right\rfloor=0 \left(i.e.  \left\lfloor \frac{5}{10} \right\rfloor = \left\lfloor 0.5 \right\rfloor = 0\right)\), so \( \left(1 - \left\lfloor \frac{t}{T_d} \right\rfloor \right) = 1 \), and the first term in equation \eqref{equ:under_model equation4} remains active.
    \item When \( t \geq T_d \), we have \( \left\lfloor\frac{t}{T_d} \right\rfloor = 1\left(i.e.  \left\lfloor \frac{12}{10} \right\rfloor = \left\lfloor 1.2 \right\rfloor = 1\right)\), so \( \left(1 - \left\lfloor \frac{t}{T_d} \right\rfloor \right) = 0 \), which effectively \textit{switches off} the first term in equation \eqref{equ:under_model equation4}.
\end{itemize}
Similarly, in equation \eqref{equ:under_model equation2}, the term \( \kappa a \) becomes active after 10 days since
\[
\left\lfloor \frac{t}{T_d} \right\rfloor = 1 \quad \text{for} \quad t \geq 10.
\]
Hence, the floor function is utilized as a \textit{switch function} to activate or deactivate certain terms in the model equations depending on whether the time \( t \) has crossed the treatment threshold \( T_d \). This structure allows the model to realistically reflect the consequences of under-dosing (premature termination of treatment) or over-dosing by dynamically controlling term activation. Here, Figure \ref{fig: under-dose} displays the model's behaviour. Using the above concept, the fourth-order Runge–Kutta method (method for solving IVP) was employed to obtain the results shown in Figure \ref{fig: under-dose}.\\ 
\begin{figure}[hbt!]
	\centering
	\includegraphics[height=8.4cm,width=11.3cm]{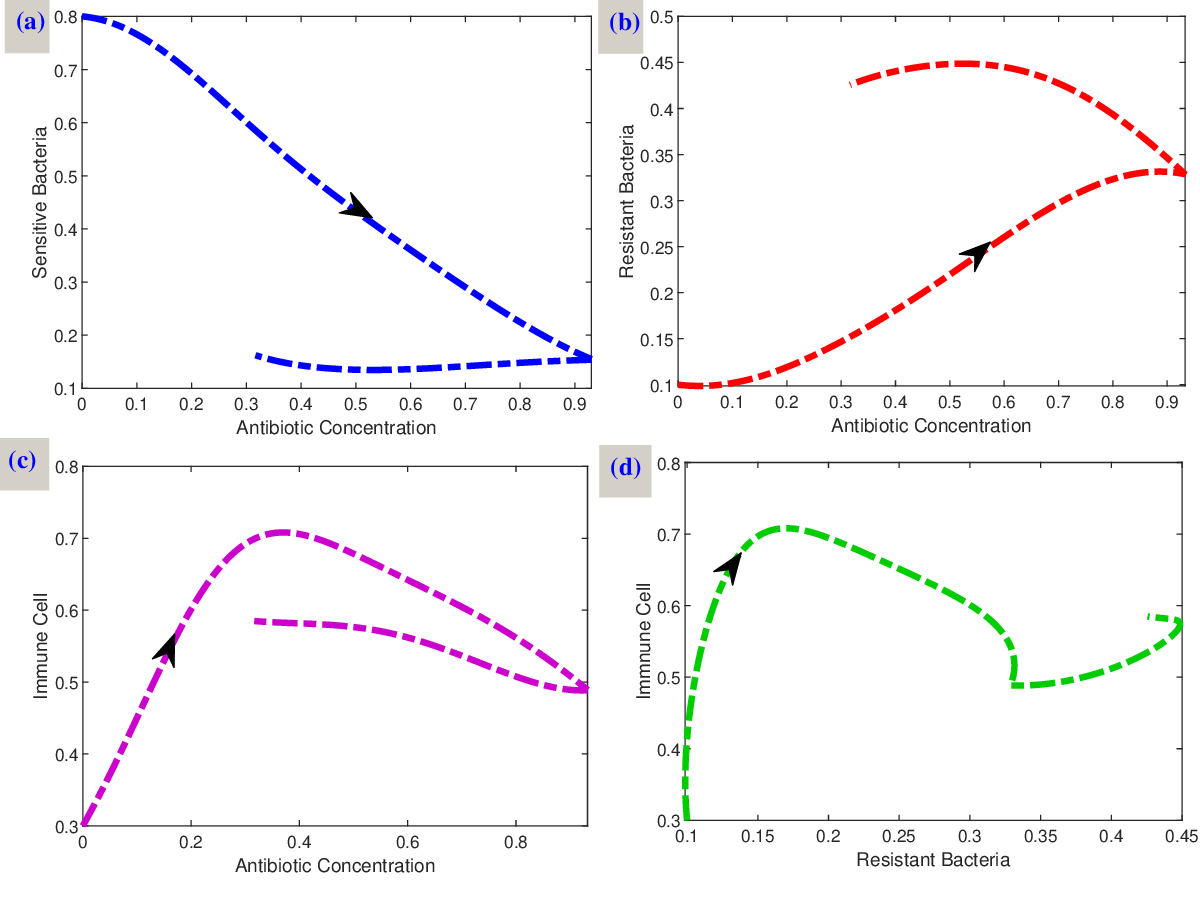}
	\caption{(a) Increased antibiotic concentration in the body reduces sensitive microorganisms, whereas reduced concentrations show the reverse.
 (b) More antibiotic concentration in the body gradually builds resistant bacteria, but stopping antibiotics under dosage promotes resistant bacteria more quickly. (c) Although taking antibiotics lower the number of immune cells up to the point where they don't kill useful bacteria at a larger amount, but stopping antibiotic use raises the number of immune cells. (d) Initially, immune cells increase in response to growing resistant bacteria, but this trend diminishes as the prevalence of resistant bacteria becomes more dominant.  }
	\label{fig: phase diagram underdose}
\end{figure}
Figure \ref{fig: under-dose} points out an eye-catching change that occurs after discontinuing antibiotic use. This graph illustrates the gradual reduction in antibiotic concentration in the body after the 10th day, resulting from the discontinuation of antibiotic intake. Besides, after 12 days, the presence of sensitive bacteria in the body has started to grow due to the lessening concentration of antibiotics. At the same time, immune cells have had an upward trend, reaching a value of $0.58$ at $14^{th}$ days. Unfortunately, at this time, the adaptation of sensitive bacteria to survive in environments with inadequate antibiotic concentrations has led to the growth of resistant bacteria at a rate greater than full dosage treatments.\\ 
Therefore, we may deduce that administering a lower dosage of antibiotics leads to a faster rate of mutation from sensitive to resistant bacteria, thereby resulting in a higher population of resistant bacteria inside the human body. All of the bacteria will become antibiotic-resistant if this condition persists. If that were the case, any antibiotic would not work against diseases caused by bacteria.\\
Furthermore, Figure \ref{fig: phase diagram underdose} illustrates the relationship between various variables in the context of an under-dose scenario of treatment with antibiotics. According to Figure \ref{fig: phase diagram underdose}(a), when the body's concentration of antibiotics rises, the number of susceptible bacteria drops from $0.8$ to $0.15$. After that, sensitive bacteria steadily increase in concentration when antibiotics are discontinued after ten days of intake. Next, Figure \ref{fig: phase diagram underdose} reflects the rise of resistant bacteria once the antibiotic concentration in the body reaches $0.1$. This increasing pattern continues until the resistant bacteria has reached a level of $3.34$. After that, when antibiotics were stopped being taken before the end of the prescription time, the number of bacteria that were resistant to the antibiotics rose once again, and by the time the prescription•time•was over, there was a slight decline in the number of resistant bacteria. 
In contrast, Figure \ref{fig: phase diagram underdose} (c) indicates that on the $10^{th}$ day after stopping the antibiotic, the remaining concentration of the antibiotic in the body was 0.93, while the immune cell was 0.49. Interestingly, by the end of the recommended period, although the antibiotic concentration was declining, the immune cell measure had risen to 0.58.  At the same time, immune cells rise in addition to the increase of resistant bacteria when antibiotics are stopped before a dosage is finished, although both eventually decrease, as can be observed in Figure \ref{fig: phase diagram underdose} (d). Thus, Figure \ref{fig: phase diagram underdose} has been drawn to explore the short-term phase dynamics between antibiotic concentration and the other key variables. It highlights that under-dosing reduces the effectiveness of treatment, allowing sensitive bacteria to survive and mutate into resistant strains, while only slightly preserving immune cells. The diagram confirms that early discontinuation of antibiotics can lead to a sharp rise in resistance.

\subsection{Effect for Over-dose Treatment}\label{se:over}

When patients have not fully recovered from the prescribed antibiotic dosage, many of them want to continue taking the antibiotic without any consultation with their doctor. In this sense, over-dose of antibiotics refers to the act of consuming a greater quantity of antibiotics than what has been specifically recommended by a medical professional. Assume that a physician prescribes a course of antibiotics for a duration of 14 days. An over-dose of antibiotics occurs when the patient exceeds a duration of 14 days while taking the medication. This part focuses on the dynamic evolution of the human health system due to the overdose effect of antibiotics, which is shown via graphical depiction.\\ 
\begin{figure}[hbt!]
	\centering
	\includegraphics[height=6.3cm,width=10.8cm]{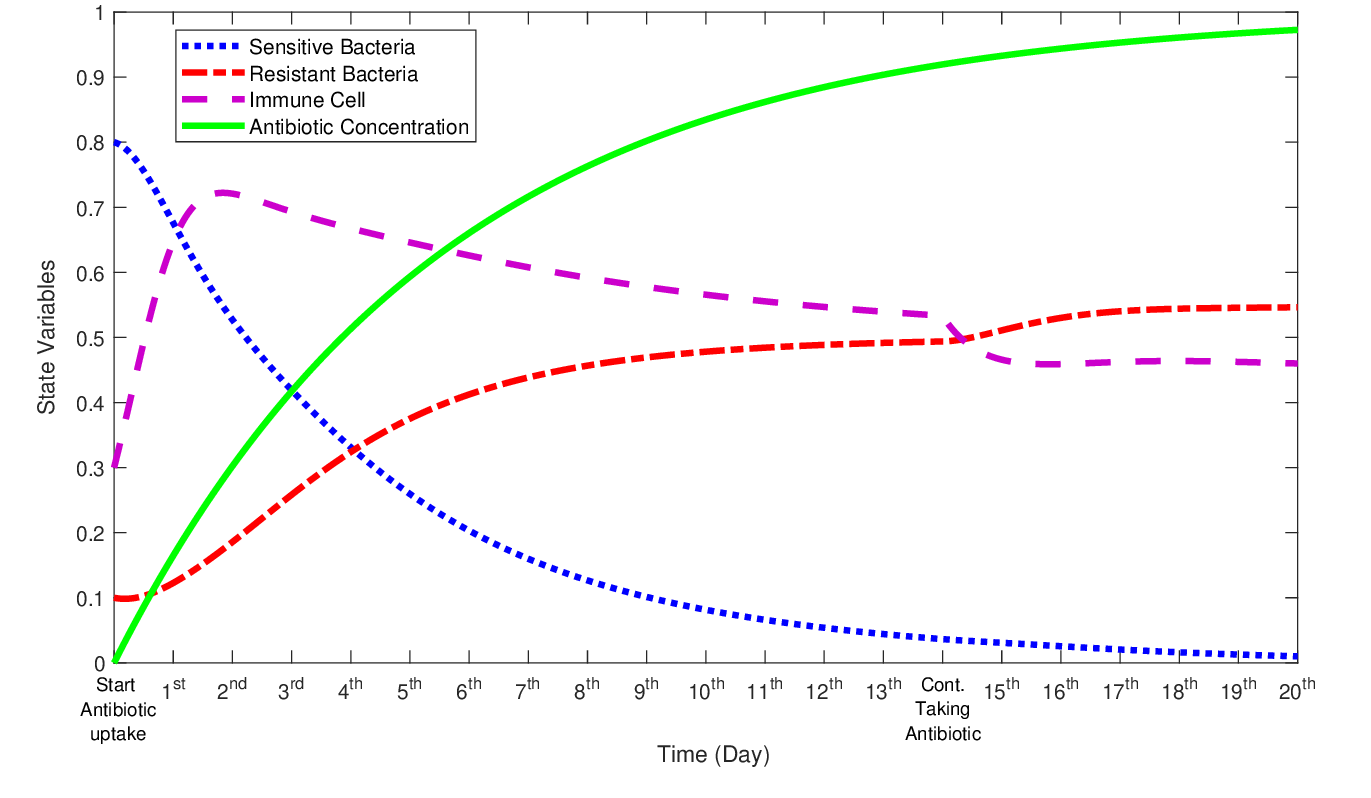}
	\caption{ Patterns of sensitive bacteria, resistant bacteria, immune cells, and antibiotic concentration throughout the first 20 days of treatment at a rate of $\mu=0.06$ $day^{-1}$ when antibiotics have been taken continuously over the 14-day prescription time interval.}
	\label{fig: Over-dose}
\end{figure}
Now we have modified our model (\ref{equ:over_model equation1})-(\ref{equ:over_model equation4}) for over-dose of antibiotics. Let us consider that over-dose of antibiotics decreases the immune cells at a rate. So we get our modified model as,
\begin{align}
    \frac{ds}{dt}&=\beta_s s\left(1-(s+r)\right)-\eta sm-\alpha sa-\sigma s a \label{equ:over_model equation1}\\
    \frac{dr}{dt}&=\beta_r r(1-(s+r))-\eta rm+\alpha sa \label{equ:over_model equation2}\\
    \frac{dm}{dt}&=\beta_m m\left(1-\frac{m}{(s+r)}\right)-\phi a \floor*{\frac{t}{T_D}}\label{equ:over_model equation3}\\
    \frac{da}{dt}&=\mu(1-a)\label{equ:over_model equation4}
\end{align}
\begin{figure}[hbt!]
	\centering
	\includegraphics[height=8.4cm,width=11.3cm]{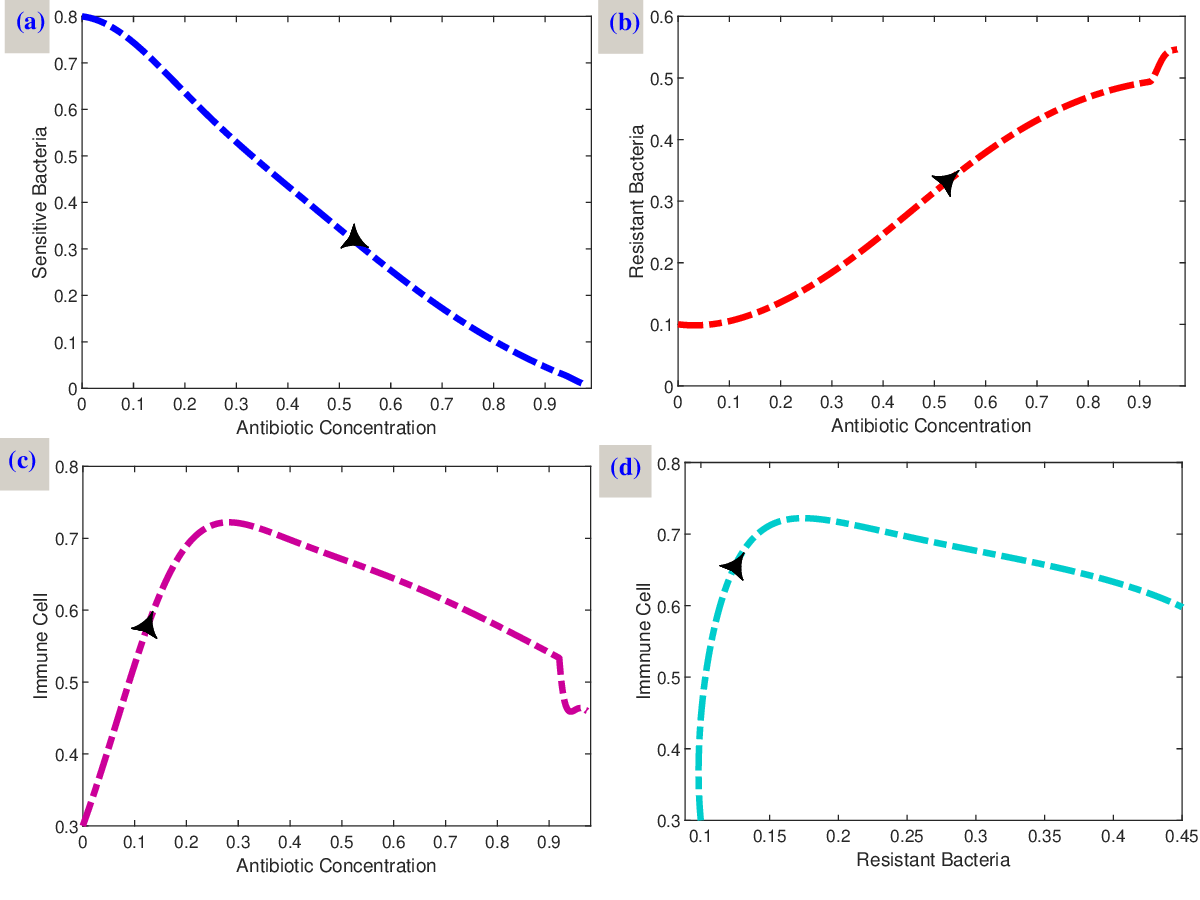}
	\caption{(a) The increasing concentration of antibiotics has led to a continuous decline in sensitive bacteria. (b) Bacterial resistance arises over time as antibiotic concentration grows, but it accelerates with overdose.
 (c) A relationship between antibiotic concentration and immune cell number has been displayed. (d) Antibiotic's overuse causes immune cells to initially an increase in response to the rise of resistant bacteria, but excessive use of these drugs is detrimental to the immune system.}
	\label{fig: phase diagram overdose}
\end{figure}
For the equation (\ref{equ:over_model equation3}), 14 days has been considered as a prescribed period of antibiotics by the doctor, but patients have taken the dosage for as long as $20^{th}$ days without medical supervision. Based on this, we have analyzed the model (\ref{equ:over_model equation1}- \ref{equ:over_model equation4}) numerically with the help of Runge-kutta fourth order taking all the values of the parameters being same as in Table \ref{table:paramters value} using the following techniques:
Let, \( T_D = 14 \). In this case:
\begin{itemize}
    \item When \( t < 14 \), we have \( \left\lfloor \frac{t}{T_D} \right\rfloor = 0 \), which implies \( \phi a \left\lfloor \frac{t}{T_D} \right\rfloor = 0 \). Thus, the corresponding term in equation \eqref{equ:over_model equation3} remains inactive—functioning like a switch-off mechanism.
    \item When \( t \geq 14 \), \( \left\lfloor \frac{t}{T_D} \right\rfloor = 1 \), activating the term \( \phi a \left\lfloor \frac{t}{T_D} \right\rfloor \) and thereby showing its effect within equation \eqref{equ:over_model equation3}.
\end{itemize}
Thus, for the over-dosing case as well, the floor function acts as a switch, activating or deactivating terms based on whether the treatment period has exceeded the prescribed duration.\\
From Figure \ref{fig: Over-dose}, we observed that after $14^{th}$ days when an over-dose of antibiotics is consumed (taking the antibiotic administration rate maintained active for 20 days), the immune system is started to fall down. At the same time, the number of resistant bacteria increased at a considerably higher rate.\\
Now, the relationship between the various state variables in the model when the overdose of antibiotics occurred has been brought to light in Figure \ref{fig: phase diagram overdose}.  The presence of the additional quantity of antibiotics concentration caused by over-use has resulted in a higher reduction in sensitive bacteria, as shown in Figure \ref{fig: phase diagram overdose}(a). Contrarily, Figure \ref{fig: phase diagram overdose}(b) indicates that the population of resistant bacteria grows gradually until the antibiotic concentration reaches 0.93. However, when the antibiotic concentration surpasses this level owing to excessive use, the population of resistant bacteria decreases drastically with further increases in antibiotic concentration. Also, it has been observed that the higher antibiotic concentration of overdose reduces more immune cells within a shorter rise of concentration and resistant bacteria (see Figures \ref{fig: phase diagram overdose}(c) and (d)). 
Overall, Figure 12 illustrates the effects of antibiotic overdose through phase-plane analysis. The results show that excessive antibiotic administration significantly lowers immune cell levels due to the destruction of beneficial bacteria, while simultaneously promoting the dominance of resistant bacteria. Therefore, we can infer that overuse of antibiotics leads to an escalation in the population of antibiotic-resistant bacteria inside the human body, posing a serious threat to the potency of antibiotics in combating bacterial infections.

\section{Results and Discussion}
This study analyzes the effect of varying antibiotic dosages on bacterial resistance using our proposed model. As shown in Figures~\ref{fig: trajectory mu = 0.06}--\ref{fig: trajectory mu = 0.12}, high-strength antibiotics increase resistant bacteria and reduce immune cells by supporting resistant strains and disrupting beneficial microbiota. Figure~\ref{fig: phase diagram general} further reveals a clear relationship between antibiotic concentration and the populations of both sensitive and resistant bacteria. Figure~\ref{fig: beta variation} shows that resistant bacteria increase more slowly and stabilize at lower levels when immune cell growth and abundance are higher, highlighting the immune system's crucial role in controlling bacterial resistance. Furthermore, according to the evidence shown in Table \ref{table:integration values}, it reveals that by enhancing our immune system by taking of a lower dosage of antibiotics, we can successfully achieve our desired target for our objective function.  
On the other hand, according to the information provided in Figures \ref{fig: under-dose} and \ref{fig: phase diagram underdose}, under-dosage of antibiotics might lead to many health issues in the human body. Insufficient dosage of antibiotics reduces the rate at which sensitive bacteria are killed. The primary issue associated with under-dosing antibiotics is the development of genetic mutations in sensitive bacteria. Stopping antibiotic treatment prematurely greatly increases the chance of leaving behind a substantial number of sensitive bacteria that were being targeted by the prescription antibiotic. As a result, these sensitive bacteria will rapidly undergo genetic mutations, leading to the emergence of resistant bacteria. So, under-dose of antibiotics will elevate the mutation rate, thereby boosting the population of antibiotic-resistant bacteria. Conversely, the immune system becomes a bit better as a result of a reduced death rate of beneficial bacteria caused by a lower dosage of antibiotics as these bacteria are essential for the production of immune cells. This finding for under-dose treatment aligns with the previously performed experimental investigation conducted by Llewelyn et al. (2017) \cite{Llewelynj3418}. Besides, Figure \ref{fig: Over-dose} as well as Figure \ref{fig: phase diagram overdose} demonstrate that a significant drop in human immune cells occurs when antibiotics are overdosed. Antibiotics reduce the effectiveness of the immune system because they eliminate beneficial bacteria in the intestines. These bacteria are responsible for producing immune cells and enhancing the immune system's overall strength.  Overdosing on antibiotics, however, speeds up the process just similar to the study mentioned in Benabbou et al. (2023) \cite{Benabbou2023}. So, the human immune system loses its strength more significantly when antibiotics are overdosed. Normally, the immune system can eliminate bacteria that have developed resistance to antibiotics. However, if patients continue taking antibiotics beyond the prescribed dosage, the immune system weakens, allowing resistant bacteria to increase.\\
Therefore, this study provides insights that are directly relevant to antibiotic regulation and public health policy. The findings highlight the need for stricter control over antibiotic distribution and enforcement of physician-guided dosing to minimize under- and over-dosing, which are key drivers of resistance. Public awareness campaigns can complement these efforts by educating the population about the risks of self-medication and misuse. Moreover, integrating this model with real-time surveillance data could help policymakers evaluate and fine-tune intervention strategies. Clinically, the model can serve as a valuable decision-support tool for hospitals and health departments to simulate dosing outcomes and optimize treatment plans, contributing to more effective and sustainable antibiotic use. 

\section{Conclusions}
Antibiotic resistance is a global concern, and Bangladesh plays a significant role in contributing to it due to the long-standing problem of antibiotic usage in the country. In contrast to several countries, antibiotics may often be acquired at pharmacies in this location without the need for a doctor's prescription \cite{AHMED201954}. On the other hand, certain bacteria in our body are divinely bestowed, which protect our bodies. Unnecessary antibiotics are decimating these beneficial bacteria. Consequently, rather than experiencing the intended healing effects of the drug, we were experiencing illness. Annually, 170,000 individuals die in the nation as a result of antibiotic resistance. If this pattern continues, it is projected that by 2050, the number of individuals who die as a result of antibiotic resistance might exceed those who die from the coronavirus \cite{TBS2023}. Therefore, it is high time to establish a more effective antibiotic-use strategy at this moment.

In order to have a better understanding of these life-threatening challenges, a mathematical model has been developed that has been figured out analytically and numerically. Our theoretical investigation has shown the existence of a unique solution for the model, which can be assured by the well-posedness of the model. Out of the five equilibrium points in the model system, $E_4$ and $E_5$ exhibit asymptotic stability under certain conditions, whereas the other three points demonstrate unstable. In the meantime, an analysis has been conducted on the equilibrium properties of the model's parameter. It has also been found that when the antibiotic concentration decreases an ongoing decline in immune cells, resulting in a continual reduction of resistant bacteria. 

The first portion of our numerical simulations included observing the bodily reactions of sensitive bacteria, resistant bacteria, and the immune system to several levels of antibiotic intake, each with different strengths.  Antibiotics are used to combat diseases, however, bacteria use their mutation mechanism to resist antibiotics, resulting in the emergence of antibiotic-resistant bacteria. Our analysis indicates that this mutation is an ongoing phenomenon that cannot be halted, but it may be mitigated. Meanwhile, it would be advantageous to enhance the growth of the immune system by diminishing the therapeutic value of antibiotics in order to reduce the expenses associated with achieving our goal of attaining the highest level of beneficial immune cells. Nevertheless, using the appropriate strength of antibiotics may diminish the prevalence of this mutation. Furthermore, to illustrate the effects of underdosage and overdosage of antibiotics, a new floor function has been introduced. With the assistance of this function, it has been noted that even if immune cells have begun to grow during this period, greater development of resistant bacteria has occurred if the recommended course of antibiotics is not finished. However, if someone continues to use antibiotics that they purchased without a prescription, there is an enormous danger for the future treatment system since immune cells are waning and resistant bacteria have become more widespread. Thus, the major findings of this study are:
\begin{itemize}
    \item Excessive use of high-strength antibiotics increases the final population of resistant bacteria while weakening the immune system by disrupting beneficial bacteria.
    
    \item The newly proposed floor function in this study effectively captures the switching behaviour of antibiotic action for both over- and under-dosing scenarios.
    
    \item Under-dosing antibiotics leads to incomplete elimination of sensitive bacteria, increasing mutation risk and accelerating the emergence of resistant strains.
    
    \item Overuse of antibiotics after the prescribed period further suppresses immune cell production and promotes antibiotic resistance.
    
    \item A higher immune cell growth rate plays a protective role by slowing the accumulation of resistant bacteria and reducing their final abundance.
    
    \item A balanced strategy combining lower antibiotic dosage with enhanced immune response can prove effective in minimizing resistance and achieving treatment goals.
\end{itemize}
Therefore, it is vital to promptly adopt initiatives to raise public awareness for reducing the growth of antibiotic-resistant bacteria by limiting the use of antibiotics. Failure to do so would render future medical treatments ineffective, perhaps leading to fatal consequences even for minor illnesses. Future studies will integrate optimal control theory and cost-effectiveness approaches to effectively minimize resistance and design economically sustainable antibiotic treatments.
\section*{Data Availability Statement}
The data used to support the findings of this study are included in the article.
\section*{Conflicts of Interest}
The authors declare that they have no known financial or interpersonal conflicts that would have caused it to have an impact on the study described in this article.
\section*{Funding}
The research did not receive any specific funding.
%\section{Appendix A} \label{Appendix A}

%\section{Appendix B} \label{Appendix B}

%\section{Appendix C} \label{Appendix C}

%\section{Appendix D} \label{Appendix D}

%\section{Appendix E} \label{Appendix E}

\bibliography{sn-bibliography}% common bib file
%% if required, the content of .bbl file can be included here once bbl is generated
%%\input sn-article.bbl

\end{document}